\documentclass[twocolumn,notitlepage,nofootinbib,pra,superscriptaddress,reprint,showkeys]{revtex4-2}

\usepackage{amsmath,amssymb,bbm,soul}
\usepackage{amsthm}
\usepackage{graphicx}
\usepackage[colorlinks=true,citecolor=blue,linkcolor=blue,urlcolor=blue]{hyperref}
\usepackage[nameinlink,capitalize]{cleveref}

\catcode`\|=\active \def|{
\fontencoding{T1}\selectfont\symbol{124}\fontencoding{\encodingdefault}}

\newcommand{\sys}{\text{S}}
\newcommand{\env}{\text{E}}

\newcommand{\Hil}{\mathcal{H}} 
\newcommand{\Hils}{\mathcal{H}_\sys} 
\newcommand{\Hile}{\mathcal{H}_\env} 
\newcommand{\Evl}{\mathcal{V}}
\newcommand{\K}{\mathcal{K}}
\newcommand{\Rho}{\hat{\rho}} 

\newcommand{\epsilonbm}{\hat{\epsilon}}
\newcommand{\ubs}[2]{\underset{#1}{\underbrace{#2}}}  
\newcommand{\B}{\mathcal{B}}   
\newcommand{\F}{\mathcal{F}}   
\newcommand{\E}{\mathcal{E}}   
\DeclareMathOperator{\trace}{Tr}
\newcommand{\Tr}[1]{\trace\left[#1\right]}
\newcommand{\partTr}[2]{\trace_{#1}\left[#2\right]}

\newcommand{\ketbra}[2]{|#1\rangle\!\langle#2|}
\newcommand{\und}{\text{and}}   
\newcommand{\spancvx}[1]{\mathrm{conv}\left[#1\right]}

\newcounter{theorems}
\newtheorem{thm}[theorems]{Theorem}

\newtheorem{defin}{Definition}
\newtheorem{lem}[theorems]{Lemma}

\newtheorem{ex.}{Example}[theorems]

\newtheorem*{cor*}{Corollary}
\newtheorem*{thm*}{Theorem}
\newtheorem*{prop*}{Proposition}
\newtheorem*{lem*}{Lemma}
\newtheorem*{rem*}{Remark}

\newtheorem{ass}{Condition}
\crefname{ass}{Cnd.}{Cnds.}
\newtheorem*{ass*}{Condition}
\crefname{fig}{Fig.}{Figs.}

\newcommand{\id}{\mathbbm{1}} 

\newcommand{\ket}[1]{|#1\rangle} 
\newcommand{\bra}[1]{\langle#1|} 


\usepackage{xcolor}
\usepackage[normalem]{ulem}
\newcommand{\stkout}[1]{\ifmmode\text{\sout{\ensuremath{#1}}}\else\sout{#1}\fi}

\definecolor{textblue}{RGB}{20,120,165}
\definecolor{textterra}{RGB}{144,73,26}

\begin{document}

\title{Classical Invasive Description of Informationally-Complete Quantum Processes}
	
	\author{Moritz F. Richter}
	\email{moritz.ferdinand.richter@physik.uni-freiburg.de} 
	\affiliation{Institute  of  Theoretical  Physics,  TUD Dresden University of Technology, D-01062 Dresden, Germany}
	\affiliation{Institute of Physics, University of Freiburg, Hermann-Herder-Str. 3, D-79104 Freiburg, Germany}
	
	\author{Andrea Smirne}
	\email{andrea.smirne@unimi.it}
	\affiliation{Dipartimento di Fisica “Aldo Pontremoli”, Universit\`a degli Studi di Milano, via Celoria 16, 20133 Milan, Italy}
	\affiliation{Istituto Nazionale di Fisica Nucleare, Sezione di Milano, via Celoria 16, 20133 Milan, Italy}

	\author{Walter T. Strunz}
	\email{walter.strunz@tu-dresden.de}
	\affiliation{Institute  of  Theoretical  Physics,  TUD Dresden University of Technology, D-01062 Dresden, Germany}
	
	\author{Dario Egloff}
	\email{d.egloff@uniandes.edu.co}
	\affiliation{Institute  of  Theoretical  Physics,  TUD Dresden University of Technology, D-01062 Dresden, Germany}
	\affiliation{Max Planck Institute for the Physics of Complex Systems, N\"othnitzer Strasse 38, 01187 Dresden, Germany}
	\affiliation{Departamento de Física, Universidad de los Andes, Cra 1 No 18A-12, Bogotá, Colombia }
	
	\keywords{Invasive Measurements, Quantumness, Contextuality, Coherence, Non-Classicality, Kolmogorov Consistency Conditions}
	
	\begin{abstract}
In classical stochastic theory, the joint probability distributions of a stochastic process obey by definition the Kolmogorov consistency conditions. Interpreting such a process as a sequence of physical measurements with probabilistic outcomes, these conditions reflect that the measurements do not alter the state of the underlying physical system. Prominently, this assumption has to be abandoned in the context of quantum mechanics, yet there are also classical processes in which measurements influence the measured system. Here, we derive conditions that characterize uniquely classical processes that are probed by a reasonable class of invasive measurements. We then analyse under what circumstances such classical processes can simulate the statistics arising from quantum processes associated with informationally-complete measurements. We expect that our investigation will help build a bridge between two fundamental traits of non-classicality, namely, coherence and contextuality.
	\end{abstract}

\maketitle

\section{Introduction}
Since the inception of quantum physics a very fundamental question driving both its theoretical development and some of its most impressive applications is the difference between this theory and the classical description of the physical world. In recent years, there has been a great advancement in the understanding of two topics at the heart of this question, coherence theory and contextuality (see~\cite{streltsov2017colloquium} and~\cite{Budroni2022} for reviews). 

Coherence theory formalises the intuition that superposition in the number states is a signature of non-classicality~\cite{streltsov2017colloquium,baumgratz2014quantifying}. What started as a parallel development to entanglement theory~\cite{baumgratz2014quantifying,Killoran2016} has since proven useful to develop deep quantitative connections between coherence and a wide range of topics, such as fringe visibility~\cite{Biswas2017,Masini2021}, state and sub-channel discrimination tasks~\cite{napoli2016robustness,Takagi2019,Takagi2019a,Skrzypczyk2019,Skrzypczyk2019a}, power of quantum computation~\cite{hillery2016coherence,matera2016coherent,Ahnefeld2022}, state conversion in the resource theory of thermodynamics~\cite{Lostaglio2015,Lostaglio2019}, quantum discord and entanglement~\cite{Streltsov2015,yadin2016quantum,Ma2016,Egloff2018,hu2018quantum}, quantum steering~\cite{Hu2016}, and, crucially for the present work, non-classical correlations in time~\cite{Smirne2018,strasberg_classical_2019,Milz2020a} 
such as those at the basis of the Leggett-Garg inequalities~\cite{leggett1985quantum}.

Contextuality sprang to live with the no-go theorem of Kochen and Specker, proving that one cannot build a hidden variable theory that assigns truth values to proper finite collections of projective measurements of a quantum system of dimension greater than two~\cite{Kochen1968}. The topic has seen a great development in recent years, for instance, showing how contextuality is a strictly stronger quantum feature than Bell-non-locality~\cite{Fine1982,Liu2016,Cabello2021}, is important for magic state quantum computation~\cite{Raussendorf2013,Howard2014} as well as for quantum channel capacity and quantum state discrimination~\cite{Cubitt2010,Schmid2018}, and is related to non-classical correlations in time~\cite{Szangolies2013,Vieira2022}.

Among the different definitions of (non)contextuality, here we rely on the identification of noncontextual statistical models as those for which there exists a joint probability distribution for all the measurements involved in the statistics \cite{Klyachko2008,Abramsky2011,Chaves2012}, which takes root in the Kolmogorov consistency conditions of the classical statistical theory~\cite{kolmogorov_foundations_1956}. Explicitly, the Kolmogorov consistency conditions state that probabilities are positive, sum to one, and that the joint probabilities satisfy a constraint on the marginalization that reads, taking for simplicity the joint probability associated with two values $x_1$ and $x_2$,
$
    \sum_{x_1} P(x_2, x_1) = P(x_2),
$
where $P(x_2)$ is the probability that the stochastic process assigns to the value $x_2$ only. These conditions are fundamental in physics, because, by virtue of the Kolmogorov extension theorem, they guarantee the existence of an overall classical description of the statistics satisfying them.
In particular, we investigate the multi-time statistics associated with sequential measurements at different times, for which a clear-cut connection has been established between quantum coherence and discord on one side and the breaking of the Kolmogorov consistency conditions in the quantum setting on the other \cite{Smirne2018,strasberg_classical_2019,Milz2020a}; see also the recent review~\cite{Strasberg2023}.

The validity of the Kolmogorov consistency conditions in classical models refers to the possibility, at least in principle, to perform noninvasive measurements that access the actual value possessed by physical quantities without disturbing the subsequent statistics. While this is indeed generally not possible in the quantum realm, as measurements modify the state of the system, it is also true that even classically one can think of measurements modifying the system state. As a specific example, one can think of the measurement of the position of a particle undergoing Brownian motion due to the interaction with, possibly very small, surrounding particles. Such a measurement might in fact modify the positions of all the particles involved and then the following statistics of the particle's position would be different depending on whether the measurement has been performed or not. A visualisation is given in \cref{fig:brownian-motion_inv-meas.}.
\begin{figure}[htb]
    \includegraphics[width=0.4\textwidth]{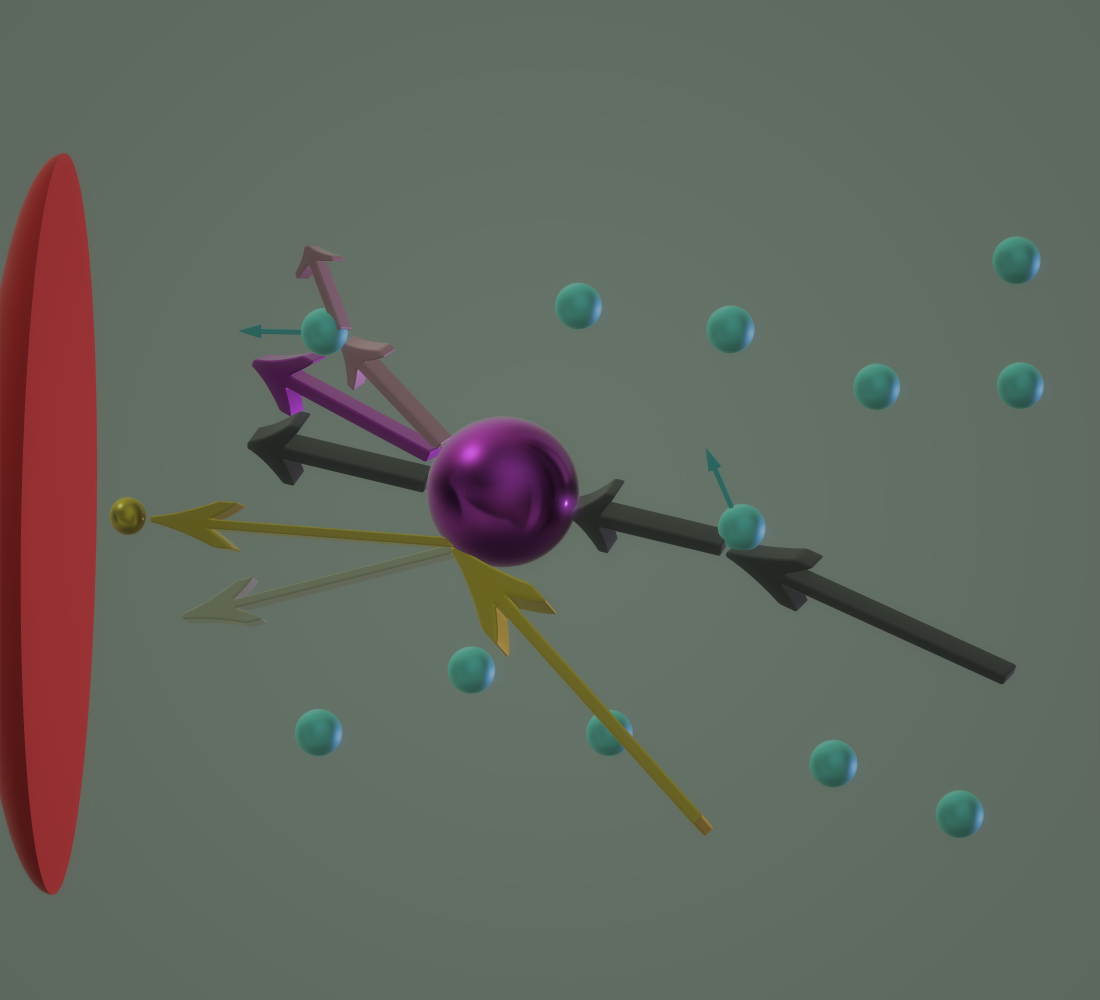}
    \caption{Sketch of a possible setup for an invasive stochastic measurement: A system particle (large violet sphere) moves randomly according to Brownian motion due to collisions with environmental particles (small green spheres). In order to measure its position at a certain time one shoots a smaller probe particle (small gold sphere) at it, which is detected later on a screen. By using the position and angle from which it is shot and the position and angle at which it is detected on the screen one can compute the position of the Brownian particle.
    \newline
    However, due to the impact of the probe particle the system particles momentum is changed. Thus, the probabilities for subsequent positions and paths (violet arrows pointing away from the particle) are altered from what they would be without having performed the position-measurement via the probe particle (black arrows).
    }
    \label{fig:brownian-motion_inv-meas.}
\end{figure}

On the one hand, considering classical invasive measurements opens the door to possible loopholes when trying to certify experimentally the nonclassicality of a given statistics, such as the so-called clumsiness loophole in the context of Leggett-Garg inequalities \cite{Wilde2012}. On the other hand, it allows an extended notion of classicality, where the Kolmogorov consistency conditions no longer hold \cite{Milz2020} and contextuality is accounted for by classical models of invasiveness \cite{Barnett2015,Budroni2022,Vitagliano2023}.

In this paper, we introduce a class of invasive statistical models, starting from a canonical classical process that satisfies the Kolmogorov consistency conditions and including invasiveness via an operational characterisation of the disturbance on the statistics induced by the measurements, along with a restriction on the accessible multi-time probabilities. We derive concrete and experimentally verifiable conditions that uniquely characterise such invasive classical processes, for the case of up to two (informationally complete) measurements and preparations for arbitrary finite-dimensional systems. Furthermore, we provide a general microscopic description of the statistics that satisfy such conditions in terms of a system plus environment model. Lastly, we determine when a quantum statistics can be simulated via the introduced classical invasive model, focusing on informationally-complete POVMs and identifying the key property of the dynamics that is linked to this extended notion of classicality.

\section{Classical invasive model}\label{sec:cim}
In this section, we are going to define what type of invasive models we consider to be classical. This is analogous to the classicality conditions of Bell~\cite{Bell1964}, but here we do not consider local observables. Instead, similarly as in~\cite{leggett_macroscopic_1980}, we consider a system probed multiple times by a measurement. However, differently from \cite{leggett_macroscopic_1980}, we do not consider that our measurement is in some way deterministic, but only ask that it is invasive in a specific way.

To deal with multi-time statistics in the presence of invasive interventions -- such as invasive measurements -- we use the notion of contextual probabilities~\cite{Khrennikov2009}. Intuitively, contextuality means that the observations of an experiment can depend on the setting of the experiment that one does not consider to be part of the actual experiment. In quantum mechanics, for instance, whether one performs a measurement at a given time can affect the outcomes at a later time. In this way, each invasive intervention defines a different context and, within a given context, the Kolmogorov consistency conditions (KCCs) hold. We can restate this using the Kolmogorov extension theorem~\cite{kolmogorov_foundations_1956}: there is a classical stochastic process that gives rise to the observable probabilities if one does not change the context. However, the collective probability distribution does not respect the KCC; taking the marginal over the outcomes of one invasive measurement does not generally tell us what would happen if that specific measurement was not performed. Again restating this last sentence using the Kolmogorov extension theorem, we find that, in general, there is no classical stochastic process that gives rise to the observable probabilities if one changes the context.

Note that contextual probabilities provide a general formalism to describe contextual theories and, if defined broadly enough, they encompass all the predictions of quantum mechanics~\cite{Khrennikov2009}. 
In the following, however, we are going to use them to define a specific class of contextual theories, 
which we still understand as classical in view of the kind of invasiveness allowed.

Besides measurements, we describe explicitly the possibility to re-prepare the system after any measurement and before the following evolution (and further measurement); indeed, also such an intervention can, in general, be invasive. Note that the formalism can be used even if no re-preparations are done. Hence, in the framework we consider here, contexts are defined by the sequence of invasive measurements chosen at different times, together with the possible re-preparation of the system after the measurements. In each context there are non-invasive measurements (at different times) that correspond to the canonical classical model, where probability distributions referring to the same sequence of times are related by KCCs. The basic idea is that such ideal measurements cannot be performed, so that only probabilities involving invasive measurements and re-preparations are actually observable. On the other hand, invasive measurements can be characterised with respect to the non-invasive ones, which defines the classical invasive model at hand. In this way, we separate explicitly the part of the model that is affected by the invasiveness of the measurement and preparation, from the part that is thought to be only due to the dynamical evolution of the system.

\subsection{Instantaneously-invasive measurements} 

Our first aim is to derive consistency conditions referring to different contexts, characterized by which invasive measurements and state re-preparations are performed. Our model includes the possibility of re-preparing the system after each invasive measurement, such that a non-invasive measurement would give a definite outcome with certainty. This means that we can have optimal control of the states even after the invasive measurements. As we assume that the re-preparation is deterministic, we condition on the specific choice, in order not to carry  in the statistics information that depends solely on a fully controllable choice in the experiment. All in all, we will consider probabilities of the form $P^{R_{n-1}, A_{n-1}; \ldots; R_1, A_1}(a_n,\ell_n; \ldots; a_1, \ell_1|r_{n-1};\ldots ;r_1)$, that is, the hypothetical probability of getting outcomes $\left\{\ell_1, a_1\right\}$ for a non-invasive measurement $L_1$ (that cannot be performed in a real experiment) followed by the invasive one $A_1$ at time $t_1$ and so on until $\left\{\ell_{n}, a_{n}\right\}$ at time $t_{n}$, with $t_n\geq\ldots\geq t_1$, conditioned on the re-preparations $R_1, \ldots R_{n-1}$ of the system in, respectively, $r_1, \ldots, r_{n-1}$ instantly after $A_1, \ldots, A_{n-1}$. As we are interested in the effect of the measurements, we only consider the possibilities of either a specific measurement $A_i$ (or re-preparation $R_i$) being performed at time $t_i$ or not. In the second case, the letter $A_i$ is omitted in the superscript of $P$. \emph{Whenever $A_1,R_1, \ldots, A_{n-1}, R_{n-1}$ are fixed}, the standard KCCs apply, so that we have, for example,
$
\sum_{\ell_1}P^{R_1,A_1}(a_2,\ell_2;  a_1, \ell_1|r_1)
=P^{R_1, A_1}(a_2,\ell_2;a_1|r_1),
$
while, in general,
$
\sum_{a_1} P^{A_1}(a_2; a_1) \neq P(a_2),
$
since the probabilities at the left and right hand sides of the previous expression refer to two different contexts, one where the invasive measurement $A_1$ at time $t_1$ is performed and the other where it is not. Also note that we write $P^{R_{n-1}, A_{n-1}; \ldots; R_1, A_1}(a_n,\ell_n; \ldots; a_1, \ell_1|r_{n-1};\ldots ;r_1)$ rather than $P^{A_n;R_{n-1}, A_{n-1}; \ldots; R_1, A_1}(a_n,\ell_n; \ldots; a_1, \ell_1|r_{n-1};\ldots ;r_1)$, that is, we do not define a context for the last measurement $A_n$. This simplification can be done assuming causality (see~ \cref{ass4}), as there is no subsequent probability that could depend on whether or not the last measurement was performed. Consequently, there is no need to define a context for the last measurement.

We now enumerate the conditions that characterise the class of invasive theories we take into account.  Firstly, we specify the operational definition of the invasive measurements, in terms of the hypothetical non-invasive measurements.\\
\begin{ass}\label{ass1}
    \emph{Whatever the previous (or subsequent) sequence of measurements,} if a non-invasive measurement $L$ would give the outcome $\ell$ with certainty, then  the probability to get an outcome $a$ when performing instead an invasive measurement $A$ is
    \begin{equation}
        \textnormal{Prob}(\textnormal{inv } A\, \mapsto a | \textnormal{non-inv } L\, \mapsto \ell) = M_{a; \ell}.
        \label{eq:pma}
    \end{equation}
\end{ass}
The so defined matrix $M$ will be called \textit{Invasive Measurement Matrix} or just \textit{IMM} and it is indeed a stochastic matrix.

A practical interpretation of this condition is the following. Assume for a moment that in principle we were able to perform a hypothetical non-invasive measurements on the system. Further assume that if we prepared the system in a fixed way, when measuring we always get the same outcome $\ell$ determined by the system state. In such a situation, the condition states that the probability of the invasive measurement to result in outcome $a$ will be $P(a) = M_{a; \ell}$. This means that while the invasive measurement measures the same physical quantity as a hypothetical non-invasive one, it might disturb the measured state. Therefore, even a well-defined state will not necessarily give always the same outcome, when measured by such an invasive measurement.

More specifically, \cref{ass1} implies that we consider invasive theories where the influence of the measurement on the subsequent statistics is ``instantaneous", i.e., it does not depend on the previous (neither on the following) sequence of measurement outcomes. This is indeed a fully motivated restriction from a physical point of view and can be seen as the counterpart of the use of a sequence of quantum instruments to describe subsequent measurements on a quantum system (see also the next section). 

A simple example where this condition is not satisfied, is the following. Suppose that the same measurement device is used in two subsequent measurements and in the second one the device does not measure the system at all, but simply shows the outcome of the first measurement. In this case, the probability distribution of the second measurement will depend on the first one and in general cannot be described solely by the state of the system just before the second measurement. This condition thus formalises the confidence of a careful experimentalist that such unwanted dependencies between measurements do not happen in the experiment. 

Explicitly, \cref{ass1} means that $\forall k=2,\ldots n, $
\begin{align}
    &P(a_k, \ell_k) = M_{a_k; \ell_k} P(\ell_k), \label{eq:p1}\\
    &P^{R_{k-1},A_{k-1}; \ldots; R_1,A_1}(a_k, \ell_k; \ldots ;a_1, \ell_1|r_{k-1};\ldots;r_1) \label{eq:p2}
    \\
    &=M_{a_k; \ell_k} P^{R_{k-1}, A_{k-1}; \ldots; R_1, A_1}(\ell_{k}; \ldots; a_1, \ell_1|r_{k-1};\ldots;r_1)
    \nonumber
\end{align}
which clarifies the role of $M_{a_k, \ell_k}$ as the conditional probability relating a sequence of measurements ending with $\ell_k$ with the one obtained by adding $a_k$. The IMM $M$ can be fully reconstructed from the probabilities associated with the invasive measurement. In fact, if we can prepare the system in a way such that a subsequent non-invasive measurement (for example, at time $t_1$) would result in $P(\ell_1) = \delta_{\ell_1, \overline{\ell}}$ for any of the possible outcomes $\overline{\ell}$, \cref{eq:p1} then gives us 
\begin{equation}\label{eq:ma1}
    M_{a_1; \overline{\ell}} = P(a_1| \ell_1=\overline{\ell}),
\end{equation}
i.e., $M_{a_1; \ell_1}$ can be reconstructed by preparing the state $\ell_1$ and registering the probability associated with the subsequent invasive measurement with outcome $a_1$.

The second condition about the invasive measurement concerns its \emph{completeness}.\\
\begin{ass}\label{ass2}
	The invasive measurement is informationally complete (IC) (but not over-complete), i.e.,
	$\left\{P(a)\right\}_{a}$ allows us to infer the one-step statistics of \emph{any}
	other measurement (invasive or not) performed at the same time.
\end{ass}

The informational completeness of the measurements, both invasive ones as well as the (actually not performable) non-invasive ones, implies that the probability distributions $P^{A_k}(a_k)$ and $P(\ell_k)$ both represent the same abstract underlying state - in the first case measured invasively and in the second case hypothetically non-invasively. This implies a one-to-one correspondence, i.e. a bijection, between both representations given by the IMM $M$ and thus $M$ must be invertible. This condition basically says that while our measurements are not ideal, in the sense that they do alter the measured system, they at least give us full information. Informational completeness corresponds to the well-known situation in quantum mechanics where one performs a (minimal) tomography: any prediction of the statistics of any possible measurement can be inferred from that information. In future work, we plan to investigate what happens when such an assumption is weakened to include, on the one hand, also the case of orthogonal projective measurements in quantum mechanics (which are not complete), or, on the other hand, situations where the measurements are over-complete. Over-complete just means that the measurement gives at least the necessary information to recover the state.

The next two conditions characterize the influence that re-preparing the system may have on the statistics. The first one fixes the interplay between the different interventions (invasive and non-invasive ones), in this way connecting probabilities referring to different contexts.\\
\begin{ass}\label{ass3}
    Given a sequence of a non-invasive measurement, an invasive one and a re-preparation procedure, all at the same time, the invasive measurement does not affect the subsequent statistics; explicitly, if said sequence occurs at all times $t_1, \ldots t_{n-1}$, one has
    \begin{align}
        &P^{R_{n-1}, A_{n-1}; \ldots; R_1, A_1}(\ell_n ;
        \ldots;a_1, \ell_1| r_{n-1};\ldots;r_1)  \label{eq:ecc} \\
        &=M_{a_{n-1};\ell_{n-1}}\ldots M_{a_{1};\ell_{1}} P^{R_{n-1}; \ldots; R_1}(\ell_n ; 
        \ldots; \ell_1|r_{n-1};\ldots;r_1).\nonumber
    \end{align}
\end{ass}
Intuitively, after a system is re-prepared to a given state, the evolution should not depend on the outcome of the measurement before the re-preparation. That is because the outcome has been discarded in the re-preparation. \cref{ass3} reflects this thought. However, the condition is not trivial. Essentially, it means that the invasive measurement only affects the degrees of freedom of the measured system. We will come back to this after introducing a picture of the statistics based on the interaction of the measured system with an environment.

The next condition concerns our ability to prepare the system in a way such that if we do not alter the system's state (that has a meaningful definition due to~\Cref{ass2}), we do not alter the subsequent evolution.
\begin{ass}\label{ass0}
	The statistics stemming from re-preparing the system in a state labeled by $a_1$ after getting the measurement outcome $a_1$ cannot be distinguished from only measuring $a_1$; for instance:
	\begin{align}\label{cond:reprep}
	P^{A_1}(a_2;a_1)
	=P^{R_1,A_1}(a_2;a_1|r_1=a_1).
	\end{align}
\end{ass}
To be explicit, this condition is in general not valid, if the re-preparation affects the setting of the experiment. In other words, it formalises the experimentalist capability to affect only the intended degrees of freedom in the re-preparation.

Finally we also assume the following.
\begin{ass}\label{ass4}
	Actions later in time do not affect earlier actions, meaning that one can always take the marginal over later actions to get the former ones; for instance:
	\begin{align}\label{cond:causal}
		\sum_{\ell_2} P^{R_1}(\ell_2;\ell_1|r_1)=P(\ell_1) \; \forall r_1.
	\end{align}
\end{ass}
This condition is nothing else than the causality condition, which is usually required in general probability theories, and has been named arrow-of-time condition in the framework of sequential measurements at different times \cite{Clemente2016}.

Although some of the conditions presented here might seem trivial, we need to assume them explicitly. This is because the object we want to analyse is the statistics, that per se does not need to satisfy any of the conditions: in fact, we do not assume any background theory that may have some of these conditions already incorporated. Instead, given a statistics, we ask whether it is possible that it stems from a classical invasive process with instantaneous invasive and informationally-complete measurements. Hence, we ask whether the given statistics fulfills all the conditions presented above and is consistent with a classical stochastic process. Moreover, we stress that the class of invasive theories defined by the conditions above does not cover any possible description that can be considered as a classical simulation of temporal correlations appearing in quantum mechanics. In particular and quite significantly, compared for example to the approach put forward in \cite{Fritz2010,Hoffmann2018}, in the classical invasive models we consider invasiveness is fully encoded into the IMM, whose dimensionality is limited by the number of the possible measurement outcomes.

\subsection{Necessary and sufficient conditions for the existence of the invasive-measurement description}
Before proceeding, we clarify which probabilities are directly accessible in the invasive theories we describe. From here on, we restrict ourselves to the case where one can perform (or not) invasive measurements $A_1$ and $A_2$ only at two different fixed instants of time $t_1$ and $t_2\geq t_1$, i.e., $n=2$, leaving for future investigation the extension to a generic number $n$ of invasive measurements at $n$ subsequent times.

The very notion of invasive theory we are using means that statistics referring to non-invasive measurements cannot be accessed directly, so that, for example, one cannot obtain the probability $P^{R_1,A_1}(a_2,\ell_2; a_1, \ell_1|r_1)$ from empirical data. On the contrary, one can indeed access the probabilities where only invasive measurements are involved, such as $P(a_i)$ and $P^{A_1}(a_2 ; a_1)$. In addition, also probabilities involving only invasive measurements and re-preparations can be accessed, as in $P^{R_1,A_1}(a_2; a_1|r_1)$, that is the probability of getting the outcome $a_2$ for an invasive measurement at time $t_2$ and $a_1$ for an invasive measurement at time $t_1$, conditioned on having re-prepared the system in the state $r_1$ after the first invasive measurement. Analogously, we can also simply perform a state preparation at time $t_1$, but without any invasive measurement at that time, in this way accessing $P^{R_1}(a_2| r_1)$.

Finally, we stress that there are probabilities that cannot be accessed directly, but that can still be \emph{reconstructed} from observable probabilities. Indeed, $P(\ell_1)$ is an example of this due to the~\cref{ass2}; denoting as $(M^{-1})_{\ell; a}$ the matrix elements of the inverse of $M$, we have in fact (from the sum over $\ell_1$ of \cref{eq:p1}) 
\begin{equation}\label{eq:l1}
    P(\ell_1) = \sum_{a_1}(M^{-1})_{\ell_1; a_1}P(a_1),
\end{equation} 
i.e., the statistics associated with the non-invasive measurement at time $t_1$ can be inferred from the statistics associated with the invasive measurement at the same time. Note that the possibility to do so depends on the invertibility of the IMM $M$ and is thus a consequence of the informationally completeness of the measurement. Similarly, we have $$P^{R_1,A_1}(\ell_2; a_1|r_1) = \sum_{a_2}(M^{-1})_{\ell_2; a_2}P^{R_1,A_1}(a_2;a_1|r_1).$$

\cref{ass3} is the key element that allows us to connect probabilities referring to different contexts, i.e., to situations where there is or there is not the intermediate invasive measurement at time $t_1$. In particular, \cref{eq:ecc} for $n=2$, along with $\sum_{a_1}M_{a_1, \ell_1}=1$ ($M$ is a stochastic matrix), \cref{eq:p2} and the KCCs with respect to $\ell_1$ and $\ell_2$ imply
\begin{align}
    P^{R_1}(a_2| r_1) = \sum_{a_1}P^{R_1,A_1}(a_2;a_1|r_1), \label{eq:c1}
\end{align}
that is, one can apply the standard KCC with respect to $A_1$ when both the invasive measurement and the re-preparation are involved at time $t_1$. Moreover, as shown in \cref{app:the}, \cref{eq:ecc} also implies
\begin{align}
    P(a_2) = \sum_{a_1,r_1}(M^{-1})_{r_1;a_1}P^{R_1,A_1}(a_2; a_1|r_1). \label{eq:c2}
\end{align}
Crucially, these relations involve only probabilities that are referring to invasive measurements and a state re-preparation and are thus accessible (see the remark at the beginning of this paragraph and \cref{eq:ma1} for the assessment of $M$).

We have now all the ingredients we need to formulate the first main result of the paper.
\begin{thm}\label{pr:pr}
    Let the probabilities $P(a_1), P(a_2),$ $P^{R_1}(a_2| r_1)$ and $P^{R_1,A_1}(a_2; a_1|r_1)$, as well as an invertible matrix of transition probabilities $M$, be given. Then, i) $P(\ell_1):=\sum_{a_i}(M^{-1})_{\ell_i; a_i}P(a_i)$ and  $P^{R_1}(\ell_2;\ell_1|r_1):=\sum_{a_1,a_2} (M^{-1})_{\ell_2;a_2}  (M^{-1})_{\ell_1;a_1} P^{R_1,A_1}(a_2;a_1|r_1)$ are probability distributions, and ii) \cref{cond:reprep,cond:causal,eq:c1,eq:c2} hold, if and only if there exists a probability distribution $P^{R_1,A_1}(a_2, \ell_2; a_1, \ell_1|r_1)$, from which the probabilities above can be obtained by \cref{ass0,ass1,ass2,ass3,ass4} together with the KCCs over the corresponding $\ell_i$s. 
\end{thm}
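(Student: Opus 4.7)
The strategy is to prove the two implications separately, with the reverse (constructive) direction requiring more work.

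\emph{Forward direction.} Assuming the joint exists and satisfies \cref{ass0,ass1,ass2,ass3,ass4} together with the KCCs, I will marginalize over $a_2$ using \cref{ass1} and then over $a_1$ using \cref{ass3} (together with the fact that $M$ is stochastic) to identify $P^{R_1}(\ell_2;\ell_1|r_1)$ as a legitimate marginal of the joint. Applying \cref{ass1} twice to $P^{R_1,A_1}(a_2;a_1|r_1)$ and inverting $M$ will then reproduce the definition in (i), thereby confirming $P^{R_1}(\ell_2;\ell_1|r_1)$ is a probability distribution; the analogous argument yields $P(\ell_1)$. Each equation of (ii) then follows directly from the corresponding condition: \cref{cond:reprep} is \cref{ass0}; \cref{cond:causal} is \cref{ass4} on the marginals; \cref{eq:c1} comes from summing over $a_1$ using \cref{ass3} and $M$ stochastic; and \cref{eq:c2} is obtained as in \cref{app:the}.

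\emph{Reverse direction.} Given (i) and (ii), the candidate joint is
\[
    P^{R_1,A_1}(a_2,\ell_2;a_1,\ell_1|r_1) := M_{a_2;\ell_2}\, M_{a_1;\ell_1}\, P^{R_1}(\ell_2;\ell_1|r_1),
\]
with $P^{R_1}(\ell_2;\ell_1|r_1)$ as defined in (i). Non-negativity follows from the non-negativity of $M$ and of $P^{R_1}(\ell_2;\ell_1|r_1)$ (by (i)); normalization follows from $M$ being stochastic and the normalization of $P^{R_1}(\ell_2;\ell_1|r_1)$ (also by (i)). I will then check that \cref{ass1} is manifest from the factored form, \cref{ass2} is the assumed invertibility of $M$, \cref{ass3} holds because $\sum_{a_2}$ of the joint equals $M_{a_1;\ell_1}P^{R_1}(\ell_2;\ell_1|r_1)$, and \cref{ass4} is inherited from \cref{cond:causal}. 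To recover the given probabilities consistently: $P^{R_1,A_1}(a_2;a_1|r_1)$ comes from the full sum over $\ell_1,\ell_2$, where the factors of $M$ cancel those of $M^{-1}$ in the definition of $P^{R_1}(\ell_2;\ell_1|r_1)$; $P^{R_1}(a_2|r_1)$ follows from \cref{eq:c1}; $P(a_1)$ arises via $\sum_{a_2}P^{R_1,A_1}(a_2;a_1|r_1=a_1)=P^{A_1}(a_1)=P(a_1)$, invoking \cref{cond:reprep} and causality; and $P(a_2)$ is recovered directly through the assumed \cref{eq:c2}.

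\emph{Main obstacle.} The principal subtlety lies in the reverse direction: the constructed joint lives in the single context $(R_1,A_1)$, yet the theorem requires probabilities from distinct contexts---most notably the bare $P(a_2)$ with no intervention at $t_1$---to be reconstructed consistently. It is precisely \cref{eq:c2} in (ii) that furnishes the non-trivial bridge between these contexts; without such a condition, no classical extension compatible with both the invasive and non-invasive sequences could be guaranteed. The substantive content of the theorem is therefore the verification that \cref{cond:reprep,cond:causal,eq:c1,eq:c2}, together with (i), form exactly the right set of compatibility conditions---necessary \emph{and} sufficient---for such an extension to exist.
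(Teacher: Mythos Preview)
Your proposal is correct but takes a more direct route than the paper. The paper actually proves an \emph{extended} theorem with a third equivalent statement: simulability by an open-system stochastic evolution with an explicit environment (stochastic matrices $T_1((e_1,\ell_1);\ell_0)$, $T_2(\ell_2;(e_1,r_1))$ and an initial $P(\ell_0)$). It then argues cyclically, $1\Rightarrow 2\Rightarrow 3\Rightarrow 1$, constructing the environment model first and only afterwards defining the joint as $P^{R_1,A_1}(a_2,\ell_2;a_1,\ell_1|r_1):=\sum_{e_1,\ell_0} M_{a_2;\ell_2}\,T_2(\ell_2;(e_1,r_1))\,M_{a_1;\ell_1}\,T_1((e_1,\ell_1);\ell_0)\,P(\ell_0)$. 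If one plugs in the paper's specific choices ($T_1$ a Kronecker delta, $T_2$ the conditional $P^{R_1}(\ell_2;\ell_1{=}e_1|r_1)/P(\ell_1{=}e_1)$), this collapses exactly to your candidate $M_{a_2;\ell_2}M_{a_1;\ell_1}P^{R_1}(\ell_2;\ell_1|r_1)$. So you have spotted the ``closed form'' of the joint and written it down without the detour, which makes your $1\Rightarrow 3$ argument shorter and more elementary. What the paper's detour buys is the additional equivalence with the system--environment picture (their \cref{fig:path}), which carries independent physical meaning and is used later to motivate \cref{th:th}.

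Two small points of care. First, in your forward direction you write ``applying \cref{ass1} twice''; strictly, the factor $M_{a_2;\ell_2}$ comes from \cref{ass1} (Eq.~\eqref{eq:p2}), while the factor $M_{a_1;\ell_1}$ comes from \cref{ass3} (Eq.~\eqref{eq:ecc} for $n=2$), so the correct attribution is \cref{ass1}+\cref{ass3}. Second, your recovery of $P(a_1)$ via $\sum_{a_2}P^{R_1,A_1}(a_2;a_1|r_1{=}a_1)$ works but is unnecessarily indirect: from your candidate joint and \cref{cond:causal} one gets $\sum_{a_2}P^{R_1,A_1}(a_2;a_1|r_1)=\sum_{\ell_1}M_{a_1;\ell_1}P(\ell_1)=P(a_1)$ for \emph{every} $r_1$, without invoking \cref{cond:reprep}. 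Neither issue is a gap; your identification of \cref{eq:c2} as the essential cross-context bridge is exactly right.
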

In other terms, we have some definite conditions on experimentally accessible probabilities that, if satisfied, guarantee the existence of an underlying contextual model that accounts for the given statistics and, by satisfying \cref{ass0,ass1,ass2,ass3,ass4}, describes instantaneously-invasive informationally-complete mea\-sure\-ments.

Quite interestingly, the proof of the statement, see \cref{app:the},  is constructive and consists in the introduction of further degrees of freedom (an environment) interacting with the system the statistics is referring to. As depicted in \cref{fig:path}, the global evolution on the system together with the environment can be modelled as a stochastic evolution, where one needs to account for the stochastic intervention of the measurements on the system whenever one performs them. As shown in \cref{app:the}, such a model exists, whenever the conditions i) and ii) stated in the theorem are satisfied. It is then easy to verify that the reduced dynamics on the system gives a contextual model that describes instantaneously-invasive informationally-complete mea\-sure\-ments, reproduces the statistics and satisfies the conditions stated in the theorem. Thus, the equivalence of the two models with conditions i) and ii) of the theorem is shown. 

As a further remark, we comment on the question, ``What if the statistics stems from an experiment where no re-preparations have been done?" In this case, one can still apply the theorem to tell whether the statistics could be reproduced by a classical stochastic process probed by instantaneously-invasive informationally-complete measurements. One still just needs to check \cref{cond:reprep,cond:causal,eq:c1,eq:c2} together with the KCCs over the corresponding $\ell_i$s for the statistics one has. However, \cref{cond:reprep} is trivial in that case and \cref{cond:causal,eq:c1} get substantially weakened (as one cannot check whether they are fulfilled for any re-preparation). However one could still check whether the probabilities without re-preparation can be embedded in a statistics including re-preparation and such that the assumptions of the Theorem hold.

Finally, note that the proof also shows that if $P(\ell_1)$ and $P^{R_1}(\ell_2;\ell_1|r_1)$ are quasi probability distributions (i.e., they can have negative entries), the theorem holds up to $P^{R_1,A_1}(a_2, \ell_2; a_1, \ell_1|r_1)$ having negative entries.

\begin{figure}[htb]
	\def\svgwidth{\linewidth}
\begingroup%
  \makeatletter%
  \providecommand\color[2][]{%
    \errmessage{(Inkscape) Color is used for the text in Inkscape, but the package 'color.sty' is not loaded}%
    \renewcommand\color[2][]{}%
  }%
  \providecommand\transparent[1]{%
    \errmessage{(Inkscape) Transparency is used (non-zero) for the text in Inkscape, but the package 'transparent.sty' is not loaded}%
    \renewcommand\transparent[1]{}%
  }%
  \providecommand\rotatebox[2]{#2}%
  \newcommand*\fsize{\dimexpr\f@size pt\relax}%
  \newcommand*\lineheight[1]{\fontsize{\fsize}{#1\fsize}\selectfont}%
  \ifx\svgwidth\undefined%
    \setlength{\unitlength}{283.46456693bp}%
    \ifx\svgscale\undefined%
      \relax%
    \else%
      \setlength{\unitlength}{\unitlength * \real{\svgscale}}%
    \fi%
  \else%
    \setlength{\unitlength}{\svgwidth}%
  \fi%
  \global\let\svgwidth\undefined%
  \global\let\svgscale\undefined%
  \makeatother%
  \begin{picture}(1,0.54359327)%
    \lineheight{1}%
    \setlength\tabcolsep{0pt}%
    \put(0,0){\includegraphics[width=\unitlength,page=1]{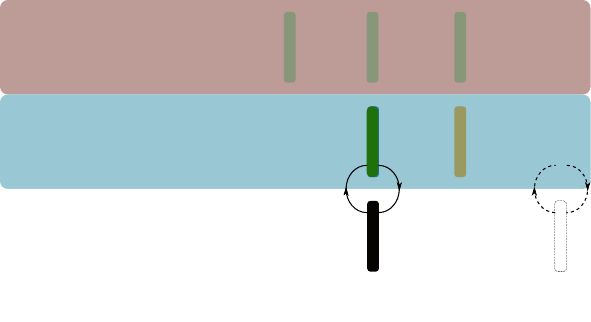}}%
    \put(-0.00233334,0.42462134){\color[rgb]{0,0,0}\makebox(0,0)[lt]{\lineheight{1.25}\smash{\begin{tabular}[t]{l}$E$\end{tabular}}}}%
    \put(-0.00025322,0.26667776){\color[rgb]{0,0,0}\makebox(0,0)[lt]{\lineheight{1.25}\smash{\begin{tabular}[t]{l}$S$\end{tabular}}}}%
    \put(-0.00217029,0.11402578){\color[rgb]{0,0,0}\makebox(0,0)[lt]{\lineheight{1.25}\smash{\begin{tabular}[t]{l}$R$ \end{tabular}}}}%
    \put(0,0){\includegraphics[width=\unitlength,page=2]{path_total_new.pdf}}%
  \end{picture}%
\endgroup%

	\caption{This figure shows the stochastic model of system $S$ plus environment $E$ that can explain the statistics gathered in the register $R$, exactly if there is a corresponding contextual model that describes instantaneously-invasive informationally-complete mea\-sure\-ments. The model starts with a hypothetical system state $s_0$ at time $t_0$ that evolves to the (not measurable) state $\ell_1$ on the system and $e_1$ on the environment under the action of the stochastic map $V_{\ell_1,e_1;s_0}$ at time $t_1$. Then there is a measurement on the system side, which changes the system's state under the action of the stochastic map $M_{a_1;\ell_1}$ to $a_1$. If the system is re-prepared in the state $r_1$, the state $a_1$ is lost. The evolution then continues analogously during time $t_2$. The model is able to reproduce the statistics if the stochastic matrices of the evolutions $V$ are independent of whether one measures or re-prepares the system at any time.}
    \label{fig:path}
\end{figure}

\section{Quantum processes with informationally-complete quantum measurements}
In the following we discuss the application of the concepts developed above to the statistics of quantum sequential measurements at different times, i.e., we investigate to what extent the predictions of quantum mechanics can be reproduced via classically invasive models as those defined in the previous section. 

In contrast to the classical  case, mutually exclusive outcomes in quantum mechanics -- i.e. an orthogonal measurement set-up -- cannot reveal the full information about the state of a quantum system, and informationally complete quantum measurements have overlapping outcomes. The idea is then to interpret this overlap as stochastic invasiveness of the kind introduced above. Consequently, we will define conditions for a quantum stochastic process associated with informationally complete measurements to provide statistics that obey the properties and conditions of consistency as discussed above. Furthermore, we will connect the fulfillment of such conditions to a definite property of the evolution of a quantum system interacting with an environment and realizing the process at hand.

\subsection{IC-POVM}
We start by recalling the definition of informationally complete quantum measurements~\citep{Renes2004,Kovacevic2008}. A rank-one \textit{Informationally Complete Positive Operator Valued Measure} (IC-POVM) is a set of positive operators $\{\E_\psi \ = \ K_\psi^\dagger K_\psi \ = \ \frac{1}{c_\psi} \ketbra{\psi}{\psi} \}$ where $\{ \ketbra{\psi}{\psi} \} =: \F$ is a frame on the space of bounded operators $\B(\Hil)$  (called a \textit{quantum frame} \citep{Renes2004,Richter2022}), i.e. any operator, like density operators, have a decomposition 
$$
\Rho \ = \ \sum_\psi f_\psi \ketbra{\psi}{\psi},
$$ 
and $c_\psi$ is chosen such that $\sum_\psi \E_\psi = \id$. For simplicity we assume the quantum frame (and IC-POVM) to be minimal, that is, a (non-orthogonal) basis. 

Under this assumption, the \textit{frame decomposition coefficients} (FDCs) $f_\psi$ of density operators representing mixed quantum states are $f_\psi \in \mathbb{R}$ due to hermiticity of $\Rho$ and $\sum_\psi f_\psi = 1$ since $\Rho$ is trace-one. Thus, using a quantum frame one can express any quantum state $\Rho$ as an at least quasi-stochastic mixture of a \emph{fixed set of pure quantum states} $\{\ketbra{\psi}{\psi}\}$. In the case of an open quantum system $\Hils$ coupled to an environment $\Hile$, i.e. giving a global space $\Hil = \Hils \otimes \Hile$, it is possible to combine a system quantum frame $\F_\sys := \{\ketbra{\psi}{\psi}_\sys \}$ and an environmental frame $\F_\env := \{\ketbra{\epsilon}{\epsilon}_\env \}$ to an overall multi-partite quantum frame $\F := \F_\sys \otimes \F_\env = \{ \ketbra{\psi}{\psi}_\sys \otimes \ketbra{\epsilon}{\epsilon}_\env \}$ s.t.
\begin{align}\label{eq:FDC}
    \Rho &= \sum_{\psi, \epsilon} f_{(\psi, \epsilon)} \ketbra{\psi}{\psi}_\sys \otimes \ketbra{\epsilon}{\epsilon}_\env = \sum_\psi f_\psi^\sys \ketbra{\psi}{\psi} \otimes \epsilonbm_\psi
\end{align}
where $f_\psi^\sys$ are the FDCs of $\Rho_S:=\partTr{\env}{\Rho}$ and the operators $\epsilonbm_\psi \in \B(\Hile)$ are hermitian, trace-one but not necessarily positive semi-definite. Consider performing a measurement, given by an IC-POVM on $\Hils$. If the system $\Hils$ before the measurement was in the state $\Rho$ and the outcome  is $\psi$, the state after the measurement is given by 
\begin{equation}\label{eq:kappa}
    \K_\psi(\Rho) = (K_\psi\otimes\id) \Rho (K_\psi\otimes\id)^\dagger.
\end{equation} 
As in this example we are considering the Kraus operators to be rank one, the state after the measurement can also be expressed as
\begin{equation}\label{eq:pipsi}
    \Pi_\psi (\Rho) = \ketbra{\psi}{\psi}_\sys \otimes \Rho_\env.
\end{equation}
Moreover, we consider an intermediate evolution between subsequent measurements, as given by a unitary map $\Evl$ acting on system and environment ($\Evl(\Rho) = V \Rho V^\dagger$, with $V \in \mathcal{U}(\Hil)$ a unitary operator). Let us assume, without loss of generality, that the input state at time $t_1$ is generated by a unitary $\Evl_0$ out of an initial state $\Rho_0$ at a time $t_0$, while the unitary between time $t_1$ and $t_2$ is denoted by $\Evl_1$. That we assume an already evolved state $\Evl_0(\Rho_0)$ entering the very first measurement is motivated by typical assumption from open quantum theory. Using this construction we can initially assume e.g. product states $\Rho_0 = \Rho_\sys \otimes \tau$ with a certain system state $\Rho_\sys$ coupled to some thermal state $\tau$ of some bath as environment and nevertheless allow for entering arbitrarily correlated states at the moment of the first measurement.

\subsection{Correspondence between classical invasive and quantum models}
All probabilities that define the observable quantities in the invasive model defined in \cref{sec:cim} can be expressed via the Born's rule applied to the proper sequence of maps -- note that the indices $a_i,  \ell_i, r_i$ now refer to the elements of the quantum frame indexed by $\{\psi\}$, where $\ketbra{\psi}{\psi}$ labels the projectors defining the IC-POVM. For instance,
\begin{align}
    P^{R_1, A_1}(a_2; a_1|r_1)  &= \Tr{\K_{a_2} \Evl_1 \Pi_{r_1} \K_{a_1}\Evl_0(\Rho_0)}\nonumber
    \\
    &= \frac{ \Tr{\K_{a_2} \Evl_1 \Pi_{r_1} \K_{a_1} \Evl_0(\Rho_0)}}{\Tr{\Pi_{r_1} \K_{a_1}\Evl_0(\Rho_0)}} \Tr{\K_{a_1} \Evl_0(\Rho_0)},\label{eq:pr1a1}
\end{align}
where $\K_{a}$ describes the state transformation due to a measurement with outcome $a$ according to \cref{eq:kappa}, while $\Pi_{r}$ is the re-preparation in the state one gets after a measurement with outcome $r$ according to \cref{eq:pipsi}. From $P(a_1)  =  \Tr{\K_{a_1} \Evl_0(\Rho_0)}$ and setting 
\begin{equation}
    M_{a;\ell} := \Tr{\E_a \ketbra{\ell}{\ell}} = \Tr{K_a \ketbra{\ell}{\ell} K_a^\dagger}
\end{equation} 
one can derive (see \cref{app:pol})
\begin{align}
    P(\ell_1) = f^\sys_{\ell_1}.
\end{align}
Thus, the (inaccessible) probability of measuring outcome $\ell_1$ in a (hypothetically) non-invasive measurement $L_1$ (see ~\cref{eq:l1}) in the invasive-stochastic model is given by the FDC $f^\sys_{\ell_1}$ of the reduced system state (see \cref{eq:FDC}). The frame decomposition at time $t_1$ reads
\begin{align}
    \Evl_0 (\Rho_0) \ := \ \sum_{(\psi, \epsilon)} \left(V_0 \vec{f}_0\right)_{(\psi, \epsilon)} \ \ketbra{\psi}{\psi}_\sys \otimes \ketbra{\epsilon}{\epsilon}_\env,
\end{align}
where we understand $\vec{f}_0$ as a vector like representation of $\Rho_0$ based on its FDCs and $V_0$ accordingly as a matrix like representation of $\Evl_0$ as similarly suggested e.g. by \citep{Yashin2020,Kiktenko2020}. Consequently, we will neglect the indices $\sys$ and $\env$ and use the convention that the first factor of a tensor product refers to $\Hils$ and the second one to $\Hile$. In \cref{app:pol}, the following lemma is shown.
\begin{lem}\label{lem:consistency}
    A quantum stochastic process using IC-POVMs with probabilities as defined above fullfils \cref{cond:reprep,cond:causal,eq:c1,eq:c2}. Furthermore, $P^{R_1}(\ell_2;\ell_1|r_1):=\sum_{a_1,a_2} (M^{-1})_{\ell_2;a_2}(M^{-1})_{\ell_1;a_1} P^{R_1,A_1}(a_2;a_1|r_1)$ and $P(\ell_1)$ are quasi probability distribution (they sum to one, but are not necessarily positive). 
\end{lem}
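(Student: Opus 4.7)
The plan is to translate each listed condition into an operator-level identity on the system-environment space and reduce everything to two elementary facts about the IMM $M$. The first is that $M$ is stochastic, $\sum_a M_{a;\ell}=1$; multiplying $MM^{-1}=\id$ on the left by the all-ones row vector immediately gives the dual relation $\sum_\ell (M^{-1})_{\ell;a}=1$. The second is the rank-one structure $K_\psi=\frac{1}{\sqrt{c_\psi}}|\psi\rangle\langle\psi|$, so that the reduced system state after a measurement with outcome $a$ is already proportional to $|a\rangle\langle a|$. In particular, $\Pi_a\circ\K_a=\K_a$, which immediately yields \cref{cond:reprep} upon taking the trace of $\K_{a_2}\Evl_1$ applied to both sides.

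For \cref{cond:causal} and \cref{eq:c1} I would first establish two marginal identities at the level of the joint Born probabilities. On the one hand, $\sum_{a_2}\K_{a_2}$ is trace-preserving because $\sum_a \E_a=\id$, and $\Pi_{r_1}$ is trace-preserving as well, so $\sum_{a_2}P^{R_1,A_1}(a_2;a_1|r_1)=P(a_1)$. On the other hand, $\partTr{\sys}{\sum_{a_1}\K_{a_1}(\sigma)}=\partTr{\sys}{\sigma}$, which follows from cyclicity under the partial trace over the system combined with the completeness relation $\sum_{a_1}K_{a_1}^\dagger K_{a_1}=\id$; this yields the channel identity $\sum_{a_1}\Pi_{r_1}\K_{a_1}(\sigma)=\Pi_{r_1}(\sigma)$ and hence $\sum_{a_1}P^{R_1,A_1}(a_2;a_1|r_1)=P^{R_1}(a_2|r_1)$, which is \cref{eq:c1}. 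Plugging the first identity into the definition of $P^{R_1}(\ell_2;\ell_1|r_1)$, summing over $\ell_2$ and using $\sum_\ell(M^{-1})_{\ell;a}=1$ then collapses the expression to $\sum_{a_1}(M^{-1})_{\ell_1;a_1}P(a_1)=P(\ell_1)$, independently of $r_1$, establishing \cref{cond:causal}.

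The main work lies in \cref{eq:c2}, which I read as a tomographic-reconstruction identity on the pre-measurement state. The plan is to expand $\Evl_0(\Rho_0)=\sum_{\psi,\epsilon}g_{\psi,\epsilon}|\psi\rangle\langle\psi|\otimes|\epsilon\rangle\langle\epsilon|$ in the product frame, compute the elementary action $\K_{a_1}(|\psi\rangle\langle\psi|\otimes|\epsilon\rangle\langle\epsilon|)=M_{a_1;\psi}\,|a_1\rangle\langle a_1|\otimes|\epsilon\rangle\langle\epsilon|$ using $M_{a_1;\psi}=|\langle a_1|\psi\rangle|^2/c_{a_1}$, apply $\Pi_{r_1}$, and then form the $(M^{-1})_{r_1;a_1}$-weighted sum. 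The identity $\sum_{a_1}(M^{-1})_{r_1;a_1}M_{a_1;\psi}=\delta_{r_1,\psi}$ collapses the combination back to $\Evl_0(\Rho_0)$ itself, showing that $\sum_{a_1,r_1}(M^{-1})_{r_1;a_1}\Pi_{r_1}\K_{a_1}$ acts as the identity on the pre-measurement state. Taking the trace against $\K_{a_2}\Evl_1(\cdot)$ then yields $P(a_2)$, establishing \cref{eq:c2}. I expect this to be the main obstacle, as it is the only step that genuinely combines the rank-one Kraus structure with informational completeness and invertibility of $M$; without invertibility the key $\delta_{r_1,\psi}$ collapse would fail.

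The quasi-probability statements then come almost for free: $P(\ell_1)=f_{\ell_1}^\sys$ by \cref{eq:l1} and the frame decomposition, so $\sum_{\ell_1}P(\ell_1)=\Tr{\Rho_\sys}=1$; similarly, summing the definition of $P^{R_1}(\ell_2;\ell_1|r_1)$ over both $\ell_1$ and $\ell_2$ and applying $\sum_\ell(M^{-1})_{\ell;a}=1$ twice reduces the total to $\sum_{a_1,a_2}P^{R_1,A_1}(a_2;a_1|r_1)=1$. Positivity, however, is not enforced at any step -- the coefficients $g_{\psi,\epsilon}$ and the individual $f_{\ell_1}^\sys$ can be negative whenever the state lies outside the convex hull of the product frame -- so one obtains only signed distributions, exactly as claimed.
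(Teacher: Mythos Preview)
Your proposal is correct and follows essentially the same route as the paper: \cref{cond:reprep} via $\Pi_{a}\circ\K_{a}=\K_{a}$, \cref{eq:c1} via $\Pi_{r_1}\sum_{a_1}\K_{a_1}=\Pi_{r_1}$, \cref{eq:c2} via the frame expansion together with $\sum_{a_1}(M^{-1})_{r_1;a_1}M_{a_1;\psi}=\delta_{r_1,\psi}$, and the quasi-probability normalisations via $\sum_{\ell}(M^{-1})_{\ell;a}=1$. Two minor remarks: your argument for \cref{cond:causal} (first establish $\sum_{a_2}P^{R_1,A_1}(a_2;a_1|r_1)=P(a_1)$, then apply $\sum_{\ell_2}(M^{-1})_{\ell_2;a_2}=1$) is more direct than the paper's, which re-expands everything in the frame; and your justification of \cref{eq:c1} is actually more careful than the paper's, which writes ``$\sum_{a_1}\K_{a_1}=\id$'' --- a statement that is literally false as a superoperator identity but becomes correct precisely because, as you note, $\Pi_{r_1}$ depends only on the environmental marginal.
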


\Cref{lem:consistency}shows that a probability distribution produced by such a quantum process using IC-POVMs is at least quasi-stochastic, i.e. consistency holds and even objects like $P(\ell_1)$ are real and sum up to one but might be negative. To characterise the cases in which all entities really behave like proper -- positive -- probabilities, we introduce the following definitions.

\begin{defin}\label{def:F-separability}
    A quantum state $\Rho$ on $\Hil = \Hils \otimes \Hile$ is called \textbf{$\F_\sys$-separable} if and only if it has a decomposition $\Rho = \sum_\psi f_\psi^\sys \ketbra{\psi}{\psi} \otimes \epsilonbm_\psi$ with $f_\psi^\sys \geq 0$ and $\epsilonbm_\psi$ is a proper environmental quantum state $\forall \ketbra{\psi}{\psi} \in \F_\sys$. A unitary evolution $\Evl \in \mathcal{U}(\Hil)$ is called \textbf{$\F_\sys$-separable} if and only if it maps $\F_\sys$-separable states to such states again.
\end{defin}
The following theorem, which is proved in \cref{app:th3}, characterises $\F_\sys$-separability as the key property that allows us to reproduce the predictions of quantum mechanics via the classical invasive models introduced in the previous section.
\begin{thm}\label{th:th}
    A quantum process using an $\F_\sys$-based IC-POVM on $\Hils$ as measurement,  $\F_\sys$-separable initial state $\Rho_0$ and $\F_\sys$-separable unitaries $\Evl_{0}, \Evl_1 \in \mathcal{U}(\Hil)$ as initial and intermediate evolutions produce a proper stochastic probability distribution for all contexts.
\end{thm}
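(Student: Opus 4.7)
The plan is to reduce the claim to an application of \cref{pr:pr} by combining it with \cref{lem:consistency}. Indeed, \cref{lem:consistency} already guarantees that every quantum stochastic process using an IC-POVM satisfies \cref{cond:reprep,cond:causal,eq:c1,eq:c2} and produces reconstructed quantities $P(\ell_1)$ and $P^{R_1}(\ell_2;\ell_1|r_1)$ that sum to one. The only remaining task is therefore to show that, under the hypotheses of $\F_\sys$-separability of $\Rho_0$, $\Evl_0$ and $\Evl_1$, these quasi-distributions are actually non-negative. Once this is established, \cref{pr:pr} directly delivers a proper joint distribution $P^{R_1,A_1}(a_2,\ell_2;a_1,\ell_1|r_1)$ which, together with the Born-rule probabilities that are non-negative by construction, yields a proper stochastic description in every context.

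To show non-negativity I would unfold the Born-rule expression \cref{eq:pr1a1} step by step. Since $\Rho_0$ is $\F_\sys$-separable and $\Evl_0$ preserves $\F_\sys$-separability,
\begin{equation*}
\Evl_0(\Rho_0) = \sum_{\ell_1} f^\sys_{\ell_1}\,\ketbra{\ell_1}{\ell_1}\otimes\epsilonbm_{\ell_1},
\end{equation*}
with $f^\sys_{\ell_1}\geq 0$ and proper environmental states $\epsilonbm_{\ell_1}$. Applying the rank-one Kraus map $\K_{a_1}$ uses $\K_{a_1}(\ketbra{\ell_1}{\ell_1}\otimes\epsilonbm_{\ell_1}) = M_{a_1;\ell_1}\ketbra{a_1}{a_1}\otimes\epsilonbm_{\ell_1}$, and the re-preparation $\Pi_{r_1}$ replaces the system factor by $\ketbra{r_1}{r_1}$. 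I then invoke linearity of $\Evl_1$ to act on each $\F_\sys$-separable summand $\ketbra{r_1}{r_1}\otimes\epsilonbm_{\ell_1}$ separately; by $\F_\sys$-separability of $\Evl_1$, each image has the form $\sum_{\ell_2} g_{\ell_2|\ell_1,r_1}\ketbra{\ell_2}{\ell_2}\otimes\tilde{\epsilonbm}_{\ell_2|\ell_1,r_1}$ with $g_{\ell_2|\ell_1,r_1}\geq 0$ summing to one over $\ell_2$. Applying $\K_{a_2}$ and tracing out yields
\begin{equation*}
P^{R_1,A_1}(a_2;a_1|r_1) = \sum_{\ell_1,\ell_2} M_{a_2;\ell_2}\,g_{\ell_2|\ell_1,r_1}\,M_{a_1;\ell_1}\,f^\sys_{\ell_1}.
\end{equation*}
Contracting with two copies of $M^{-1}$ collapses both $M$-sums to Kronecker deltas, leaving $P^{R_1}(\ell_2;\ell_1|r_1) = f^\sys_{\ell_1}\,g_{\ell_2|\ell_1,r_1}\geq 0$; an analogous, simpler computation gives $P(\ell_1)=f^\sys_{\ell_1}\geq 0$. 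The no-intervention contexts are handled in the same spirit: $\Evl_1\circ\Evl_0$ is a composition of $\F_\sys$-separable unitaries and hence $\F_\sys$-separable, so $P(\ell_2)$ coincides with a non-negative FDC of $\Evl_1\Evl_0(\Rho_0)$, while $P^{A_1}(a_2;a_1)$ is obtained from \cref{cond:reprep}.

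The step that I expect to require the most care is the moment just after the invasive measurement: the ``environment factor'' $\sum_{\ell_1} f^\sys_{\ell_1} M_{a_1;\ell_1}\epsilonbm_{\ell_1}$ is not normalized, so one cannot apply \cref{def:F-separability} directly to the composite state $\ketbra{r_1}{r_1}\otimes(\cdots)$ after re-preparation. The resolution is to decompose linearly first, so that $\F_\sys$-separability of $\Evl_1$ is invoked only on the individual \emph{proper} summands $\ketbra{r_1}{r_1}\otimes\epsilonbm_{\ell_1}$; the uniqueness of the resulting coefficients $g_{\ell_2|\ell_1,r_1}$, and hence their clean identification with the reconstructed $P^{R_1}(\ell_2;\ell_1|r_1)$, is ensured by the assumption that the quantum frame $\F_\sys$ is minimal (a basis of $\B(\Hils)$).
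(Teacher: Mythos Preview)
Your proposal is correct and follows the same high-level strategy as the paper: invoke \cref{lem:consistency} for the consistency relations, establish non-negativity of $P(\ell_1)$ and $P^{R_1}(\ell_2;\ell_1|r_1)$ from $\F_\sys$-separability, and conclude via \cref{pr:pr}. The organization of the positivity argument differs in a useful way, though. The paper applies $\F_\sys$-separability of $\Evl_1$ to the \emph{normalized} post-measurement-and-re-preparation state $\Rho'_{a_1,r_1}=\Evl_1\Pi_{r_1}\K_{a_1}\Evl_0(\Rho_0)/P(a_1)$ as a whole, obtains its system FDCs $f'^\sys_{\ell_2}(a_1,r_1)\geq 0$, and lands on $P^{R_1}(\ell_2;\ell_1|r_1)=\sum_{a_1} f'^\sys_{\ell_2}(a_1,r_1)(M^{-1})_{\ell_1;a_1}P(a_1)$, asserting $\geq 0$ without further comment even though $(M^{-1})_{\ell_1;a_1}$ can be negative. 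Your route---keeping the $\ell_1$-decomposition of $\Evl_0(\Rho_0)$ explicit and applying $\F_\sys$-separability of $\Evl_1$ to each genuine product state $\ketbra{r_1}{r_1}\otimes\epsilonbm_{\ell_1}$---makes both $M^{-1}$ contractions collapse to Kronecker deltas, so positivity becomes manifest as $f^\sys_{\ell_1}\,g_{\ell_2|\ell_1,r_1}\geq 0$. In effect your computation supplies the missing justification for the paper's final inequality, and your remark about invoking \cref{def:F-separability} only on properly normalized summands is exactly the care needed there.
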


\subsection{Markovian and non-Markovian processes}
\cref{th:th} states that a process at hand can be simulated via invasive stochastic probabilities whenever the condition of $\F_\sys$-separability is ensured. Analogously to what happens in the case of ideal projective measurements~\cite{Smirne2018,Milz2020}, there is an important class of processes for which $\F_\sys$-separability reduces to a simpler condition, expressed in terms of the dynamical maps acting on the open system only; namely, this is the case for  Markovian processes. 

Here, what we mean with Markovianity is that the whole hierarchy of probability distributions, and hence in particular the probabilities involved in our analysis, is fixed by the completely positive trace preserving (CPTP) dynamical maps between two subsequent measurements $i$ and $i+1$ defined as
\begin{equation}\label{eq:redmap}
    \Lambda_{i}(\Rho) = 
    \partTr{\env}{\Evl_{i} (\Rho \otimes \hat{\tau}_i) \Evl_{i}^\dagger}
\end{equation}
where $\hat{\tau}_i$ is a reference state of the environment (possibly different at different times). Hence, Markovianity is here understood in terms of a property of multi-time probability distributions, analogously to the definition for classical stochastic processes; for a comparison among different notions of quantum Markovianity, we refer the reader to \cite{Li2018}. This setting means that for any measurement time all relevant information for the subsequent statistics is stored in the system state $\Rho_\sys \in \B(\Hils)$ only and can hence be encoded in a simple frame vector $\vec{f}$ for the system frame $\F_\sys$ corresponding to the IC-POVM at hand. In turn, the frame-representation of a CPTP map is simply a matrix $V_\Lambda$ which maps the frame vector of the input state to the frame vector of the output state. As a consequence, $V_\Lambda$ has to be a quasi-stochastic matrix, i.e. all entries are real and each column sums up to one. 

Now, if a quantum state has non-negative FDCs only (i.e. $\vec{f}$ has non-negative entries), we say that it is \textit{$\F_\sys$-positive} and, accordingly, we define $\F_\sys$-positivity of a CPTP map by requiring that it maps $\F_\sys$-positive states to $\F_\sys$-positive states again; indeed, this is equivalent to the requirement that the corresponding $V_\Lambda$ is a proper stochastic matrix (all entries are non-negative). Even more, when the quantum process is Markovian, this is enough to ensure the simulability via invasive processes. It is in fact easy to see that requiring an $\F_\sys$-separable initial quantum state $\Rho_0$ in \cref{th:th} reduces for product states to the necessity of an $\F_\sys$-positive quantum state $\Rho_\sys$ on the system side and that $\F_\sys$-positivity is the Markovian reduction of $\F_\sys$-separability of \cref{def:F-separability}. An equivalent characterisation of $\F_\sys$-positivity is that such a state $\Rho$ is in the convex hull $\spancvx{\F_\sys}$ of the frame and that such a CPTP map sends its convex hull into its convex hull again. For an illustration of a convex hull of a quantum frame for a qubit see \cref{fig:SIC-tetrahedron}. Thus, the probability distribution of any quantum Markovian process using an $\F_\sys$ based IC-POVM, an initial state $\Rho \in \spancvx{\F_\sys} $ and an intermediate CPTP map $\Lambda: \spancvx{\F_\sys} \to \spancvx{\F_\sys}$ fulfils all conditions for a stochastically invasive statistic.
\begin{figure}[htb]
    \includegraphics[width=0.4\textwidth]{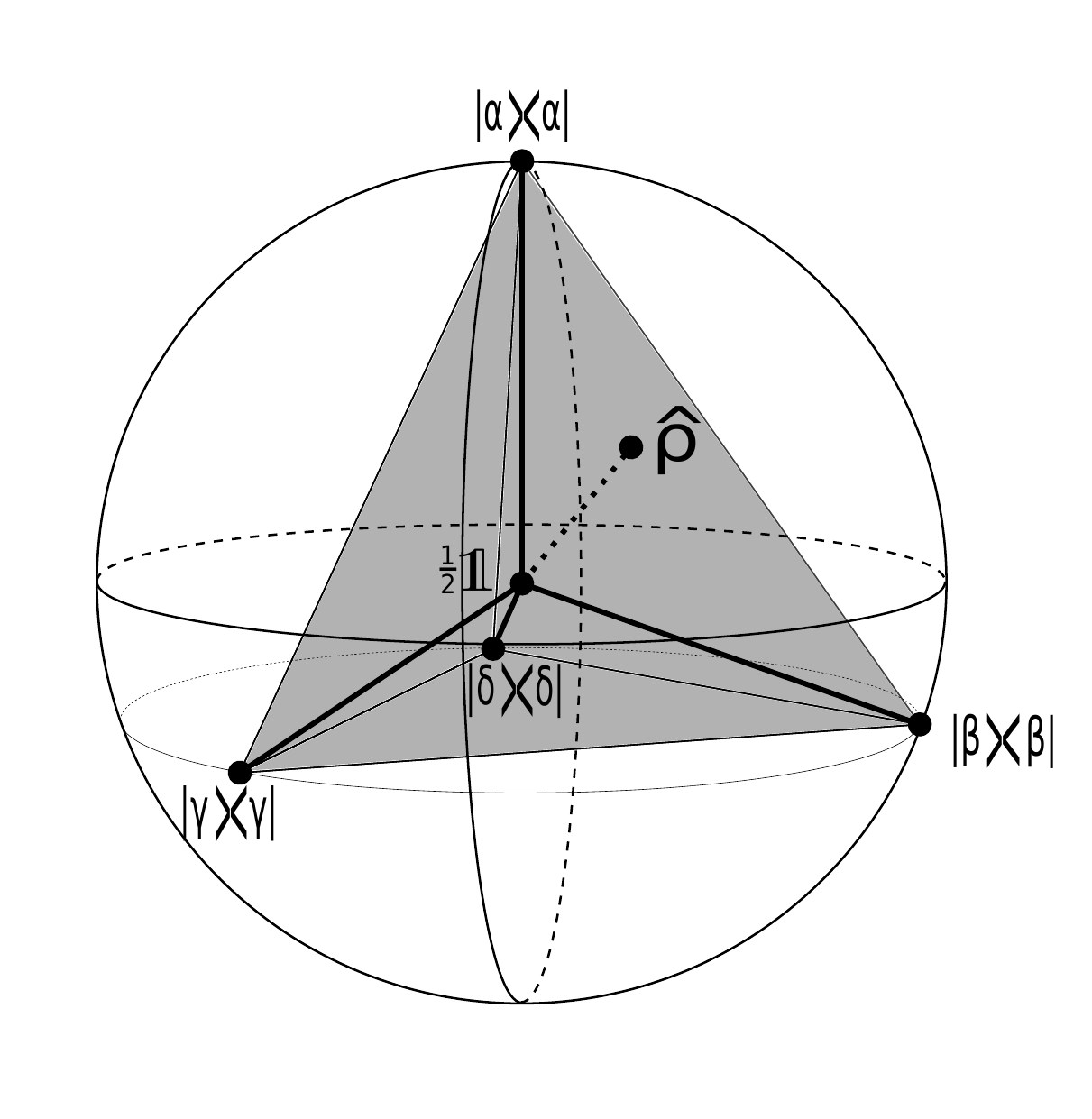}
    \caption{A quantum frame $\F_\mathrm{SIC} = \{\ketbra{\alpha}{\alpha}, ... , \ketbra{\delta}{\delta}\}$ corresponding to a \emph{symmetric-informationally-complete}-POVM (or \emph{SIC-POVM}) on a qubit represented by the Bloch ball. In grey the convex hull of $\F_\mathrm{SIC}$ is given as regular tetrahedron. All quantum states $\Rho$ inside this tetrahedron are $\F_\mathrm{SIC}$-positive states and any CPTP map mapping this tetrahedron into itself are $\F_\mathrm{SIC}$-positive dynamical maps.}
    \label{fig:SIC-tetrahedron}
\end{figure}

In the non-Markovian case, the dynamical maps are no longer enough to infer the multi-time probabilities \cite{Milz2020}, and thus the possibility to simulate them via an invasive classical stochastic model. Consider the following simple example of a non-Markovian process that, despite being associated with an $\F_\sys$-positive dynamical map, cannot be simulated via the stochastic representation based on invasive measurements defined in \cref{sec:cim}. We have a two-level open quantum system, $\mathcal{H}_S $, interacting with a two-level environment, $\mathcal{H}_E $, so that the global evolution is fixed by the unitary map which acts between any considered time interval, i.e. from $t_0$ to $t_1$ as well as from $t_1$ to $t_2$,
\begin{equation}
    \Evl=e^{-\frac{i}{2}\left(\sigma_x \otimes \sigma_x + \sigma_y \otimes \sigma_y +2 \sigma_z \otimes \sigma_z\right)},
\end{equation}
and the initial environmental state $\tau_0 = \mathbbm{1}/2$. The resulting open-system CPTP map defined via \cref{eq:redmap} is easily seen to be a contraction of the Bloch ball, isotropic along the $x-y$ plan by an amount $\cos(1)\cos(2)$ while along the $z$-axis by an amount $\cos(1)^2$, so that the convex hull of the IC-POVM defined by the pure states $\left\{\ket{0}, \frac{1}{\sqrt{3}}\ket{0}+\sqrt{\frac{2}{3}}e^{i 2k \pi/3}\ket{1}\right\}_{k=1,2,3}$ is mapped into itself, i.e., the map is $\F_\sys$-positive. On the other hand, a direct evaluation of $P^{R_1,A1}(a_2; a_1|r_1)$ via \cref{eq:pr1a1}, shows that the quantity $P^{R_1,A1}(\ell_2;a_1| r_1)=\sum_{a_2}(M^{-1})_{\ell_2; a_2}P^{R_1,A1}(a_2; a_1|r_1)$ is not a probability distribution, since it takes on negative values, see \cref{tab:prob}; more details are given in \cref{app:cnsp}. Thus, because of \cref{pr:pr} there is no instantaneously-invasive informationally-complete stochastic process accounting for the same statistics; indeed, \cref{th:th} implies that this is due to the lack of $\F_\sys$-separability of the overall evolution.

\begin{table}
    \caption{\label{tab:prob}Values of $\sum_{a_1}P^{R_1,A1}(\ell_2;a_1| r_1)$ for different $\ell_2$ (rows) and $r_1$ (columns); indeed the negative values (in boldface) for fixed $r_1, \ell_2$ mean that at least one of the corresponding $P^{R_1,A1}(\ell_2;a_1| r_1)$ is negative, and it cannot be thus associated with a probability distribution. }
    \begin{ruledtabular}
    \begin{tabular}{llll}
    0.34 & 0.05 & 0.25 & {\bf -0.15}\\
    0.61 & 0.56 & 0.78 & 0.78\\
    0.18 & 0.08 & {\bf -0.15} & 0.28\\
    {\bf -0.13} & 0.31 & 0.12 & 0.09
    \end{tabular}
    \end{ruledtabular}
\end{table}

\section{Conclusion} 
In this paper, we have fully characterized a class of stochastic models that are invasive, but whose invasiveness can still be interpreted as having a classical origin. In particular, we have provided definite conditions that allow one, by looking at the statistics of the measurement outcomes, to tell whether such a classical model exists or not. Additionally, as our proof is constructive, one can use it to construct an explicit model, if it exists. We then identified a significant class of quantum processes that can be simulated by such a classical model. The analysis is focused on processes associated with sequential measurements of rank-one informationally-complete POVMs, deriving a sufficient condition to represent them via an invasive classical model that is connected with a definite property of the dynamics of the measured system. Furthermore, we have also shown, by means of an explicit example, that there are indeed quantum processes that cannot be simulated via the invasive models defined here.

This point deserves special attention since the fact that in quantum mechanics the measurement of a system alters its state is often understood as a major difference to classical physics or even \emph{the} peculiarity of quantum physics. However, our model and example suggests that the difference between classical and quantum physics is much more subtle than just the invasive character of measurements in the latter one. In this connection, it is also interesting to consider recent results that show how quantum mechanics can be modeled by a classical stochastic model, such as those presented in~\cite{Fritz2010,Hoffmann2018}. 

In this respect, a crucial constraint of our approach is the dimensionality of the classical invasive models taken into account. The very definition of the invasive measurement matrix in~\cref{ass1}, along with the completeness of the measurement expressed by~\cref{ass2} and the connection with the multi-time statistics in~\cref{ass3} essentially make the dimensionality of the classical model limited by the number of outcomes of the measured quantity. On the other hand, in~\cite{Fritz2010,Hoffmann2018} the internal state of the system that fixes the classical invasive model is not a-priori limited in dimensionality, which leads to the possibility to simulate all distributions that satisfy temporal ordering, thus including all quantum ones.

It will be an interesting task to generalise our results in various ways and deepen their connection to the existing literature. Recently, for instance, a generalization of the Kolmogorov consistency conditions has been brought forward with the idea to characterise quantum processes~\cite{Milz2020}. As the conditions presented here characterise an extended class of stochastic processes -- and hence are also direct generalisations of the same consistency conditions -- it will be interesting to study how these generalisations differ. In addition, we hope that the recent results on the dynamics of basis-dependent discord and coherence \cite{Smirne2018,strasberg_classical_2019,Milz2020a}, in relation to the non-classicality of time-correlations, can be seen as a limiting case of what we have investigated here. Indeed, the main difference consists in the type of measurement applied, as there orthogonal (but not complete) measurements were considered, while here we analyse complete (but not orthogonal) measurements. A further connection that is certainly worth investigating is with the theory of epsilon-transducers~\cite{Barnett2015,Budroni2022}, which has been used to calculate the memory needed to simulate a contextual experiment by a non-contextual one~\cite{Cabello2018}.

Finally, we note that the statistics considered in this paper refer to experiments and measurements that alter the state of a given system in a stochastic way, such that the result does not show the state before the measurement, but the state after it. Such measurements do not only appear in quantum mechanics, but may be important also in different scenarios \cite{Khrennikov2010}, where the fact that one does a measurement or experiment changes the outcomes. This is for instance a common problem in behavioural experiments, where the experiment does not show the natural behaviour of the subjects, but their behaviour under the experimental conditions.

\section*{Acknowledgment}
This work has been supported by the German Research Foundation (DFG) through FOR 5099;
A.S. acknowledges support from UniMi, via PSR-2 2021. D.E. acknowledges support from the Swiss National Science
Foundation (Grant No. P2SKP2 18406) and thanks the financial support from the Faculty of Science of Universidad de Los Andes, Bogotá, Colombia.

\bibliography{references.bib}

\cleardoublepage

\onecolumngrid

\appendix
\section{Extended statement and proof of Theorem \ref{pr:pr}}\label{app:the}
We begin by defining the two-time measurement-and-prepare statistics as the statistics that contains all (in principle) experimentally accessible probability distributions as laid out in the main text.
\begin{defin}
	A \textit{two-time measurement-and-prepare statistics from invasive measurements} is the collection of probability distributions: $(P(a_1), P(a_2), P^{A_1}(a_2;a_1), P^{R_1}(a_2|r_1), P^{R_1,A_1}(a_2;a_1|r_1), \linebreak[2] M_{a_1;\ell_1} = P(a_1|\ell_1), M_{a_2;\ell_2}=P(a_2|\ell_2))$,
	where the meaning of the different labels is explained in detail in the main text.
\end{defin}
Having clarified what quantities are considered, we now proceed by defining two models which may be used to explain the observed statistics. The first model is more in line with statistical descriptions, like the one used in Kolmogorov's theorem, while the second is directly defined in terms
of an open system, an environment and their interaction. 
\begin{defin}\label{def:sim_cont}
    We say that a two-time measurement-and-prepare statistics from invasive measurements can be simulated by a contextual model with instantaneously-invasive informationally-complete (IIIC) measurements  if and only if there is a probability distribution $P^{R_1, A_1}(a_2,\ell_2;a_1,\ell_1|r_1)$ that is consistent with the conditions 1-5 in the main text and from which the probabilities above can be obtained by \cref{eq:ecc} together with the KCCs over the corresponding $\ell_i$s. 
\end{defin}

\begin{defin}\label{def:sim_evol}
	We say that a \textit{two-time measurement-and-prepare statistics from invasive measurements} can be simulated by an open system stochastic evolution with IIIC measurements if and only if there are stochastic matrices $T_1((e_1,\ell_1);\ell_0)$ (with $\sum_{e_1,\ell_1} T_1((e_1,\ell_1);\ell_0)=1\forall \ell_0$) and $T_2(\ell_2;(e_1,\ell_1))$ (with $\sum_{\ell_2} T_2(\ell_2;(e_1,\ell_1))=1\forall e_1,\ell_1$), and a probability distribution $P(\ell_0)$ such that all the above probabilities can be calculated from the corresponding evolutions from $P(\ell_0)$ under the action of $T_1$ and $T_2$ by applying the measurements $M_{a_1,\ell_1}$ and $M_{a_2,\ell_2}$. 
	That is:
	\begin{align}
	   &P^{R_1,A_1}(a_2;a_1|r_1)=\sum_{\ell_2,\ell_1,e_1,\ell_0} M_{a_2; \ell_2 } T_2(\ell_2;(e_1,r_1))M_{a_1;\ell_1}T_1((e_1,\ell_1);\ell_0) P(\ell_0), \\ 
	   &P^{R_1}(a_2|r_1) =\sum_{\ell_2,\ell_1,e_1,\ell_0} M_{a_2; \ell_2 } T_2(\ell_2;(e_1,r_1))T_1((e_1,\ell_1);\ell_0) P(\ell_0),\\
	   &P(a_2) =\sum_{\ell_2,\ell_1,e_1,\ell_0} M_{a_2; \ell_2 } T_2(\ell_2;(e_1,\ell_1)) T_1((e_1,\ell_1);\ell_0) P(\ell_0),\\
	   &P^{A_1}(a_2;a_1) =\sum_{\ell_2,\ell_1,e_1,\ell_0} M_{a_2; \ell_2 } T_2(\ell_2;(e_1,a_1))M_{a_1;\ell_1}T_1((e_1,\ell_1);\ell_0) P(\ell_0),\\
	   &P(a_1)	=\sum_{\ell_1,e_1,\ell_0} M_{a_1;\ell_1}T_1((e_1,\ell_1);\ell_0) P(\ell_0). 
	\end{align}
\end{defin}
The above two models certainly feel very much related. Indeed one can test either of the two models by simply checking four conditions as stated in the following theorem which entails Theorem 1 of the main text
\begin{thm*}
	Let $S=(P(a_1), P(a_2), P^{A_1}(a_2;a_1), P^{R_1}(a_2|r_1),$ $ P^{R_1,A_1}(a_2;a_1|r_1), M_{a_1;\ell_1}=P(a_1|\ell_1), M_{a_2;\ell_2}=P(a_2|\ell_2))$ be a two-time measurement-and-prepare statistics from invasive measurements. Furthermore, let $M$ be invertible. Let $P(\ell_1):=\sum_{a_1} (M^{-1})_{\ell_1;a_1} P(a_1)$ and $P^{R_1}(\ell_2;\ell_1|r_1):=\sum_{a_1,a_2} (M^{-1})_{\ell_2;a_2}  (M^{-1})_{\ell_1;a_1} P^{R_1,A_1}(a_2;a_1|r_1)$. Then, the following three statements are equivalent.
	\begin{enumerate}
		\item The probability distributions associated with $S$ satisfy
		\begin{align}
			&P(\ell_1)	\geq 0 \text{ and } \sum_{\ell_1} P(\ell_1)=1,\label{cond:1}\\
			&P^{R_1}(\ell_2;\ell_1|r_1)\geq 0\text{ and } \sum_{\ell_2,\ell_1} P^{R_1}(\ell_2;\ell_1|r_1)=1\label{cond:2}\\
			&\sum_{\ell_2} P^{R_1}(\ell_2;\ell_1|r_1)=P(\ell_1) \; \forall r_1. \label{cond:5}\\
			&P^{R_1}(a_2|r_1)=\sum_{a_1} P^{R_1,A_1}(a_2;a_1|r_1) \label{cond:3}\\
			&P(a_2)=\sum_{a_1,r_1} (M^{-1})_{r_1;a_1} P^{R_1,A_1}(a_2;a_1|r_1), \label{cond:4}\\
			&P^{A_1}(a_2;a_1) = P^{R_1,A_1}(a_2;a_1|r_1=a_1)\label{cond:6}
		\end{align}
		\item $S$ can be simulated by an open system stochastic evolution with IIIC measurements.
		\item $S$ can be simulated by a contextual model with IIIC measurements.
	\end{enumerate}
    In the case that $P(\ell_1)$ and $P^{R_1}(\ell_2;\ell_1|r_1)$ are quasi probability distributions (and can have negative entries), the theorem holds up to $P^{R_1,A_1}(a_2, \ell_2; a_1, \ell_1|r_1)$ having negative entries and the corresponding evolutions can be quasi-stochastic.
\end{thm*}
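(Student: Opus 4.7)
The plan is to prove the three-way equivalence via the cycle $(1)\Rightarrow(2)\Rightarrow(3)\Rightarrow(1)$, dispatching the two essentially verificational directions before tackling the constructive implication, which carries the real content.

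For $(3)\Rightarrow(1)$, given a joint distribution $P^{R_1,A_1}(a_2,\ell_2;a_1,\ell_1|r_1)$ satisfying the five axioms invoked in Definition~\ref{def:sim_cont}, each of \cref{cond:1}--\cref{cond:6} follows by marginalisation. Specifically, \cref{cond:1,cond:2} are positivity and normalisation of the corresponding marginals; \cref{cond:5} follows from marginalising \cref{eq:ecc} over $\ell_2$ and using the stochasticity of $M$; \cref{cond:3} comes from the KCCs over $\ell_1,\ell_2$ combined with causality (\cref{ass4}); and \cref{cond:6} is a direct rewriting of \cref{ass0}. The only nontrivial step is \cref{cond:4}: starting from \cref{eq:ecc}, one weights both sides by $(M^{-1})_{r_1;a_1}$, sums over $a_1$ and $r_1$, uses $\sum_{a_1}(M^{-1})_{r_1;a_1}M_{a_1;\ell_1}=\delta_{r_1,\ell_1}$, and finally invokes \cref{ass0} together with \cref{ass4} to collapse the expression to $P(a_2)$.

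For $(2)\Rightarrow(3)$, one defines the joint distribution by
\[
P^{R_1,A_1}(a_2,\ell_2;a_1,\ell_1|r_1):=\sum_{e_1,\ell_0} M_{a_2;\ell_2}\,T_2(\ell_2;(e_1,r_1))\,M_{a_1;\ell_1}\,T_1((e_1,\ell_1);\ell_0)\,P(\ell_0),
\]
and verifies that the factorised appearance of the $M_{a;\ell}$ automatically enforces \cref{ass1} and \cref{ass3}, while the stochasticity of $T_1,T_2$ and the way $r_1$ enters $T_2$ give \cref{ass4} and \cref{ass0}; \cref{ass2} is built in through the invertibility of $M$.

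The main obstacle is $(1)\Rightarrow(2)$, where the environment and its stochastic dynamics must be manufactured from the accessible data alone. The natural idea, matching the picture in \cref{fig:path}, is to let the environment at time $t_1$ record the (unobservable) system state $\ell_1$ verbatim. I would set $P(\ell_0):=P(\ell_1)$, $T_1((e_1,\ell_1);\ell_0):=\delta_{e_1,\ell_1}\delta_{\ell_1,\ell_0}$, and $T_2(\ell_2;(e_1,r_1)):=P^{R_1}(\ell_2|e_1,r_1)$ on the support of the conditioning variables (extending arbitrarily elsewhere). Conditions \cref{cond:1,cond:2} then make these legitimate (quasi-)stochastic objects, and \cref{cond:5} guarantees the correct normalisation of $T_2$ when summed over $\ell_2$. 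The five probability identities of Definition~\ref{def:sim_evol} are then checked in turn: the $P^{R_1,A_1}(a_2;a_1|r_1)$ identity is immediate by construction; $P^{R_1}(a_2|r_1)$ follows from \cref{cond:3} together with $\sum_{\ell}(M^{-1})_{\ell;a}=1$ (a consequence of $M$ being column-stochastic); $P^{A_1}(a_2;a_1)$ uses \cref{cond:6}, since under the environment identification ``no re-preparation after measuring $a_1$'' corresponds exactly to $r_1=a_1$; and $P(a_2)$ follows from \cref{cond:4}, which is precisely the identity needed to equate the ``no intervention'' statistics with the marginalisation over the intermediate outcome weighted by $M^{-1}$. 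The quasi-probability extension costs nothing extra, since none of these algebraic steps uses positivity, only normalisation and the invertibility of $M$.
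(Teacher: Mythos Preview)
Your proposal is correct and follows essentially the same route as the paper: the same cycle $(1)\Rightarrow(2)\Rightarrow(3)\Rightarrow(1)$, the same construction for $(1)\Rightarrow(2)$ (the environment records a verbatim copy of the hidden state via $T_1((e_1,\ell_1);\ell_0)=\delta_{e_1,\ell_1}\delta_{\ell_1,\ell_0}$, the initial distribution is $P(\ell_1)$, and $T_2$ is the conditional $P^{R_1}(\ell_2\mid \ell_1=e_1,r_1)$), and the same definition of the joint distribution for $(2)\Rightarrow(3)$. Two minor attribution slips in your $(3)\Rightarrow(1)$ sketch are worth correcting: \cref{cond:5} is a direct restatement of causality (\cref{ass4}) once the model's $P^{R_1}(\ell_2;\ell_1|r_1)$ has been identified with the defined quantity, not a consequence of marginalising \cref{eq:ecc} over $\ell_2$; and \cref{cond:3} relies on \cref{ass3} together with the column-stochasticity of $M$, rather than on causality. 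Neither affects the substance of the argument.
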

We will prove this theorem by the steps $1. \Rightarrow 2.$, $2. \Rightarrow 3.$, and $3. \Rightarrow 1.$ For the first step, we will take a simple initial state $P_0(\ell_0):= \sum_{a_1}P(a_1)\delta_{a_1,\ell_0}$ (which is basically the same as the state $P(a_1)$)  and explicitly construct the matrices $T_1$ and $T_2$. We then proceed to show that these are indeed stochastic matrices and that all the conditions in statement 2. are satisfied. By construction, the probabilities one can generate from these maps satisfy conditions 1 to 5 in the main text, and directly from the conditions we get the right probabilities. The last step is outlined in the main text to motivate the conditions of statement 1. and here we will provide the details. 
\begin{proof}
	"$1. \Rightarrow 2.$":\\
	We define
	\begin{align}
		&P_0(\ell_0):= \sum_{a_1}P(a_1)\delta_{a_1,\ell_0},\\
		&T_1((e_1,\ell_1);\ell_0):=  \delta_{e_1,\ell_0} \delta_{\ell_1,\ell_0},\\
		&T_2(\ell_2;(e_1,r_1))
		:=
			\frac{\sum_{a_1,a_2} (M^{-1})_{\ell_2;a_2}  (M^{-1})_{e_1;a_1} P^{R_1,A_1}(a_2;a_1|r_1) }
                {\sum_{a_1}(M^{-1})_{e_1;a_1} P(a_1)}
		=\frac{P^{R_1}(\ell_2;\ell_1=e_1|r_1)}{P(\ell_1=e_1)}
	\end{align}
	with the convention that $0/0=1/n_{L2}$, with $n_{L2}$ the dimension of the space labelled by $\ell_2$. It directly follows from the definition that $T_1$ is a stochastic matrix and from condition~(\ref{cond:1}) that $P_0(\ell_0)$ is a probability distribution. $T_2$ is a stochastic map; it is positive, if both the nominator and denominator are positive, which is true by conditions~(\ref{cond:1} and \ref{cond:2}). Furthermore, by conditions~(\ref{cond:5}) and~(\ref{cond:1}) $T_2$ is a conditional probability and as such a stochastic map.	From the above definitions we get that 
	\begin{align*}
		&\sum_{\ell_2,\ell_1,e_1,\ell_0} M_{a_2; \ell_2 } T_2(\ell_2;(e_1,r_1)) M_{a_1;\ell_1}T_1((e_1,\ell_1);\ell_0) p(\ell_0)\\
		=&\sum_{\ell_2,\ell_1,e_1,\ell_0} M_{a_2; \ell_2 } 
		\frac{\sum_{a_1',a_2'} (M^{-1})_{\ell_2;a_2'}  (M^{-1})_{e_1;a_1'} P^{R_1,A_1}(a_2';a_1'|r_1) 
		}{\sum_{a_1'}(M^{-1})_{e_1;a_1'} P(a_1') 
		}
		M_{a_1;\ell_1}
		\sum_{a_1''} \delta_{e_1,\ell_0} \delta_{\ell_1,\ell_0}(M^{-1})_{\ell_1;a_1''} P(a_1'')\\
		=&\sum_{\ell_1}
		\frac{\sum_{a_1',a_2'}  \delta_{a_2,a_2'}  (M^{-1})_{\ell_1;a_1'} P^{R_1,A_1}(a_2';a_1'|r_1)  
		}{ \sum_{a_1'}(M^{-1})_{\ell_1;a_1'} P(a_1')
		}
		M_{a_1;\ell_1}
		\sum_{a_1''}  (M^{-1})_{\ell_1;a_1''} P(a_1'')\\
		=&\sum_{\ell_1} 
		\frac{\sum_{a_1'}  (M^{-1})_{\ell_1;a_1'} P^{R_1,A_1}(a_2;a_1'|r_1)  
		}{P(\ell_1)
		}
		M_{a_1;\ell_1}
		P(\ell_1)\\
		=&\sum_{\ell_1} 
		\sum_{a_1'}   (M^{-1})_{\ell_1;a_1'} P^{R_1,A_1}(a_2;a_1'|r_1) M_{a_1;\ell_1}
		=
		\sum_{a_1'}   \delta_{a_1,a_1'}  P^{R_1,A_1}(a_2;a_1'|r_1) 
		= P^{R_1,A_1}(a_2;a_1|r_1).
	\end{align*}
	This proves the first condition of~\cref{def:sim_evol}. The second condition then easily follows by using condition~(\ref{cond:6}),
	\begin{align*}
		&P^{A_1}(a_2;a_1)=P^{R_1,A_1}(a_2;a_1|r_1=a_1)
		=\sum_{\ell_2,\ell_1,e_1,\ell_0} M_{a_2; \ell_2 } T_2(\ell_2;(e_1,a_1))M_{a_1;\ell_1}T_1((e_1,\ell_1);\ell_0) p(\ell_0).
	\end{align*}
	For the third condition, we can insert the identity $\left[\sum_{a_1}(M^{-1})_{r_1;a_1} M_{a_1;\ell_1}\right]=\delta_{r_1,\ell_1}$, to get
	\begin{align*}
	   &\sum_{\ell_2,\ell_1,e_1,\ell_0} M_{a_2; \ell_2 } T_2(\ell_2;(e_1,\ell_1))T_1((e_1,\ell_1);\ell_0) p(\ell_0)
	   \\&=\sum_{\ell_2,\ell_1,e_1,\ell_0,r_1} M_{a_2; \ell_2 } T_2(\ell_2;(e_1,r_1)) \left[\sum_{a_1}(M^{-1})_{r_1;a_1} M_{a_1;\ell_1}\right] T_1((e_1,\ell_1);\ell_0) p(\ell_0)
	   \\&= \sum_{a_1,r_1} (M^{-1})_{r_1;a_1} P^{R_1,A_1}(a_2;a_1|r_1)
	   =P(a_2),
	\end{align*}
	where the last line follows from condition~(\ref{cond:4}). The condition $P(a_1)	=\sum_{\ell_1,e_1,\ell_0} M_{a_1;\ell_1}T_1((e_1,\ell_1);\ell_0) p(\ell_0)$ is trivially satisfied. 

    For the second last identity, we have that
    \begin{align*}
        &\sum_{\ell_2,\ell_1,e_1,\ell_0} M_{a_2; \ell_2 } T_2(\ell_2;(e_1,r_1)) T_1((e_1,\ell_1);\ell_0) p(\ell_0)\\
        =&\sum_{\ell_2,\ell_1,e_1,\ell_0} M_{a_2; \ell_2 } 
        \frac{\sum_{a_1',a_2'} (M^{-1})_{\ell_2;a_2'}  (M^{-1})_{e_1;a_1'} P^{R_1,A_1}(a_2';a_1'|r_1) 
        }{\sum_{a_1'}(M^{-1})_{e_1;a_1'} P(a_1') 
        }
        \sum_{a_1''} \delta_{e_1,\ell_0} \delta_{\ell_1,\ell_0}(M^{-1})_{\ell_1;a_1''} P(a_1'')\\
        =&\sum_{\ell_1}
        \frac{\sum_{a_1',a_2'}  \delta_{a_2,a_2'}  (M^{-1})_{\ell_1;a_1'} P^{R_1,A_1}(a_2';a_1'|r_1)  
        }{ \sum_{a_1'}(M^{-1})_{\ell_1;a_1'} P(a_1')
        }
        \sum_{a_1''}  (M^{-1})_{\ell_1;a_1''} P(a_1'')\\
        =&\sum_{\ell_1} 
        \sum_{a_1'}  (M^{-1})_{\ell_1;a_1'} P^{R_1,A_1}(a_2;a_1'|r_1)  
        =
        \sum_{a_1'}  P^{R_1,A_1}(a_2;a_1'|r_1) 
        = P^{R_1}(a_2|r_1), 
    \end{align*}
    where we have used that $M$ is a stochastic matrix and hence the columns of its inverse sum to one and~\cref{cond:3}. The last identity follows directly from the definitions. With this we have proven $"1. \Rightarrow 2."$ 

    The proof of $"2. \Rightarrow 3."$ is relatively straightforward. We define $P^{R_1,A_1}(a_2,\ell_2;a_1,\ell_1|r_1):=\sum_{e_1,\ell_0}  M_{a_2; \ell_2 } T_2(\ell_2;(e_1,r_1)) M_{a_1;\ell_1}T_1((e_1,\ell_1);\ell_0) p(\ell_0)$.  The statistics is then consistent with conditions 1, 2, 3 and 5 by construction, while condition 4 directly follows from $M$ being invertible. Finally, we get all of the probability distributions of~\cref{def:sim_cont} directly from the ones of~\cref{def:sim_evol}. In detail:
    \begin{align*}
 	      &P^{R_1,A_1}(a_2;a_1|r_1)
 	      =\sum_{\ell_2,\ell_1,e_1,\ell_0} M_{a_2; \ell_2 } T_2(\ell_2;(e_1,r_1))M_{a_1;\ell_1}T_1((e_1,\ell_1);\ell_0) P(\ell_0) = \sum_{\ell_2,\ell_1} P^{R_1,A_1}(a_2,\ell_2;a_1,\ell_1|r_1),
    \end{align*}
    meaning that the marginal over the unknown states $\ell_1$ and $\ell_2$ yields the measured probability distribution for   the case of doing all interventions.
    \begin{align*}
 	      &P^{A_1}(a_2;a_1)
 	      =\sum_{\ell_2,\ell_1,e_1,\ell_0} M_{a_2; \ell_2 } T_2(\ell_2;(e_1,a_1))M_{a_1;\ell_1}T_1((e_1,\ell_1);\ell_0) P(\ell_0) =\sum_{\ell_2,\ell_1} P^{R_1,A_1}(a_2,\ell_2;a_1,\ell_1|a_1),
    \end{align*}
    meaning that not re-preparing yields the same result as re-preparing in the measured state. 
    \begin{align*}
 	      &P^{R_1}(a_2|r_1)
 	      =\sum_{\ell_2,\ell_1,e_1,\ell_0} M_{a_2; \ell_2 } T_2(\ell_2;(e_1,r_1))T_1((e_1,\ell_1);\ell_0) P(\ell_0)
 	      \\
 	      =&\sum_{\ell_2,\ell_1,e_1,\ell_0} M_{a_2; \ell_2 } T_2(\ell_2;(e_1,r_1))
 	      \left(\sum_{a_1}M_{a_1; \ell_1 } \right)
 	      T_1((e_1,\ell_1);\ell_0) P(\ell_0)
 	      =\sum_{a_1,\ell_2,\ell_1} P^{R_1,A_1}(a_2,\ell_2;a_1,\ell_1|r_1),
    \end{align*}
    which means that we can take the marginal over $a_1$ in the usual way, as we delete the correlations with the environment by re-preparing the system.
    \begin{align*}
	   &P(a_2)
	     =\sum_{\ell_2,\ell_1,e_1,\ell_0} M_{a_2; \ell_2 } T_2(\ell_2;(e_1,\ell_1)) T_1((e_1,\ell_1);\ell_0) P(\ell_0)
	   \\=&
	   \sum_{\ell_2,\ell_1,e_1,\ell_0,r_1} M_{a_2; \ell_2 } T_2(\ell_2;(e_1,r_1))
	   \delta_{r_1,\ell_1}
	   T_1((e_1,\ell_1);\ell_0) P(\ell_0)
	   \\=&
	   \sum_{\ell_2,\ell_1,e_1,\ell_0,r_1} M_{a_2; \ell_2 } T_2(\ell_2;(e_1,r_1))
	   \left(\sum_{a_1}(M^{-1})_{r_1;a_1} M_{a_1;\ell_1}\right)
	   T_1((e_1,\ell_1);\ell_0) P(\ell_0)
	   \\=&\sum_{a_1,r_1,\ell_2,\ell_1} (M^{-1})_{r_1;a_1} 
	   \sum_{e_1,\ell_0} M_{a_2; \ell_2 } T_2(\ell_2;(e_1,r_1))
	   (M_{a_1;\ell_1}
	   T_1((e_1,\ell_1);\ell_0) P(\ell_0)\\
	   =&\sum_{a_1,r_1,\ell_2,\ell_1} (M^{-1})_{r_1;a_1}  P^{R_1,A_1}(a_2,\ell_2;a_1,\ell_1|r_1),
    \end{align*}
    where we do need to take into account the correlations with the environment at time $1$. Finally,
    \begin{align*}
 	      &P(a_1)
 	      =\sum_{\ell_1,e_1,\ell_0} M_{a_1;\ell_1}T_1((e_1,\ell_1);\ell_0) P(\ell_0)\\
 	      =&\sum_{\ell_1,e_1,\ell_0} \left(\sum_{a_2} M_{a_2; \ell_2 }\right) \left(\sum_{\ell_2}T_2(\ell_2;(e_1,r_1))\right) 
 	      M_{a_1;\ell_1}T_1((e_1,\ell_1);\ell_0) P(\ell_0)\\
 	      =& \sum_{a_2,\ell_2,\ell_1} P^{R_1,A_1}(a_2,\ell_2;a_1,\ell_1|r_1),
    \end{align*}
    which is just causality.

    We are left with showing $"3.\Rightarrow 1."$. First note that \cref{cond:2} implies \cref{cond:1} by virtue of condition 5, and \cref{cond:1} follows from the fact that
    $$
    P^{R_1}(\ell_2;\ell_1|r_1)=\sum_{a_1,a_2}P^{R_1,A_1}(a_2,\ell_2;a_1,\ell_1|r_1),
    $$
    with $P^{R_1,A_1}(a_2,\ell_2;a_1,\ell_1|r_1)$ a probability distribution by assumption.

    \cref{cond:5} is a direct consequence of causality (condition 5). \cref{cond:3} follows directly from condition 3 and the fact that the columns of $M^{-1}$ sum to one (being the inverse of a stochastic matrix), while  \cref{cond:6} follows from condition 4. The only condition left to check is \cref{cond:4}.
    \begin{align}
 	      &\sum_{a_1,r_1}(M^{-1})_{r_1;a_1}P^{R_1,A_1}(a_2; a_1|r_1)
 	      =\sum_{a_1,r_1,\ell_1}(M^{-1})_{r_1;a_1}P^{R_1,A_1}(a_2;  a_1,\ell_1|r_1)\nonumber
 	      \\&=\sum_{a_1,r_1,\ell_1}(M^{-1})_{r_1;a_1}M_{a_1;\ell_1}P^{R_1}(a_2; \ell_1|r_1)
 	      =\sum_{r_1,\ell_1}\delta_{r_1,\ell_1}P^{R_1}(a_2; \ell_1|r_1)\nonumber
 	      \\&=\sum_{\ell_1}P^{R_1}(a_2; \ell_1|r_1=\ell_1)
 	      =\sum_{\ell_1}P(a_2; \ell_1)
 	      =P(a_2),\nonumber
    \end{align}
    where the first equation follows from the KCC, the second equation from condition 3, the fifth from condition 5 and the last from the KCC.
 		 
    It follows directly from the proof, that quasi probability distributions $P(\ell_1)$ and $P^{R_1}(\ell_2;\ell_1|r_1)$, correspond to a quasi probability distribution $P^{R_1,A_1}(a_2, \ell_2; a_1, \ell_1|r_1)$ and quasi-stochastic evolutions.
\end{proof}

\section{Proof of Lemma \ref{lem:consistency}}\label{app:pol}
For convenience we reiterate the lemma here:
\begin{lem*}[\ref{lem:consistency}]
	A quantum stochastic process using IC-POVMs with probabilities as defined in the main text fulfils \cref{cond:reprep,cond:causal,eq:c1,eq:c2}. Furthermore, $P^{R_1}(\ell_2;\ell_1|r_1):=\sum_{a_1,a_2} (M^{-1})_{\ell_2;a_2}  (M^{-1})_{\ell_1;a_1} P^{R_1,A_1}(a_2;a_1|r_1)$ and $P(\ell_1)$ are quasi probability distribution (they sum to one, but are not necessarily positive).
\end{lem*}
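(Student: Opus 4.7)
The strategy is to unfold every probability on both sides of \cref{cond:reprep,cond:causal,eq:c1,eq:c2} via the Born-rule expressions of Section~III.B, reduce each identity to an operator equation involving the rank-one instruments $\K_{a}$, the re-preparation maps $\Pi_{r}$, and the unitaries $\Evl_{0},\Evl_{1}$, and then verify it using two elementary facts: (i) $\partTr{\sys}{\K_{a}(\Rho)}=\partTr{\sys}{(\E_{a}\otimes\id)\Rho}$, since $K_{a}^{\dagger}K_{a}=\E_{a}$; and (ii) the frame decomposition~\cref{eq:FDC} combined with $\sum_{a}(M^{-1})_{\ell;a}M_{a;\psi}=\delta_{\ell,\psi}$ and $\sum_{\ell}(M^{-1})_{\ell;a}=1$, the latter because \cref{ass1} makes $M$ column-stochastic, which transfers to the columns of $M^{-1}$.

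First I would address \cref{cond:reprep}, which follows directly from the rank-one structure of the Kraus operators: a one-line computation gives $\K_{a_1}(\Rho)=\ketbra{a_1}{a_1}_\sys\otimes\sigmabm_\env^{(a_1)}$, where $\sigmabm_\env^{(a_1)}=\partTr{\sys}{\K_{a_1}(\Rho)}$, so $\Pi_{a_1}\K_{a_1}=\K_{a_1}$; inserting this identity inside the Born-rule expression for $P^{R_1,A_1}(a_2;a_1|r_1=a_1)$ immediately reproduces $P^{A_1}(a_2;a_1)=\Tr{\K_{a_2}\Evl_1\K_{a_1}\Evl_0(\Rho_0)}$. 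For \cref{eq:c1}, I use that $\Pi_{r_1}(\cdot)=\ketbra{r_1}{r_1}\otimes\partTr{\sys}{\cdot}$ depends only on the environmental marginal, so summing over $a_1$ with $\sum_{a_1}\E_{a_1}=\id$ yields $\sum_{a_1}\Pi_{r_1}\K_{a_1}(\Rho)=\Pi_{r_1}(\Rho)$, from which $\sum_{a_1}P^{R_1,A_1}(a_2;a_1|r_1)=P^{R_1}(a_2|r_1)$ is immediate.

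For \cref{eq:c2}, the central operator identity to establish is $\sum_{a_1,r_1}(M^{-1})_{r_1;a_1}\Pi_{r_1}\K_{a_1}=\mathrm{id}_{\B(\Hil)}$. I would prove this by inserting the bipartite frame decomposition $\Rho=\sum_\psi f^\sys_\psi \ketbra{\psi}{\psi}\otimes\epsilonbm_\psi$ from \cref{eq:FDC}: then $\partTr{\sys}{\K_{a_1}(\Rho)}=\sum_\psi f^\sys_\psi M_{a_1;\psi}\epsilonbm_\psi$, summing against $(M^{-1})_{r_1;a_1}$ collapses this to $f^\sys_{r_1}\epsilonbm_{r_1}$ via the frame biorthogonality, and the outer sum over $r_1$ accompanied by $\ketbra{r_1}{r_1}$ reassembles $\Rho$. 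Applied to $\Rho=\Evl_0(\Rho_0)$ inside $\Tr{\K_{a_2}\Evl_1(\cdot)}$, this produces $P(a_2)=\Tr{\K_{a_2}\Evl_1\Evl_0(\Rho_0)}$ as required.

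Finally, \cref{cond:causal} and the quasi-probability claim can be handled jointly. Using that the columns of $M^{-1}$ sum to one, $\sum_{\ell_2}P^{R_1}(\ell_2;\ell_1|r_1)=\sum_{a_1}(M^{-1})_{\ell_1;a_1}\sum_{a_2}P^{R_1,A_1}(a_2;a_1|r_1)$; the inner sum equals $P(a_1)$ by trace preservation of $\Evl_1\Pi_{r_1}\K_{a_1}$, and the result coincides with $P(\ell_1)$ by its very definition; the same column-sum property gives $\sum_{\ell_1}P(\ell_1)=\sum_{a_1}P(a_1)=1$ and in turn $\sum_{\ell_1,\ell_2}P^{R_1}(\ell_2;\ell_1|r_1)=1$. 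Reality is automatic because $M^{-1}$ has real entries, but the quantities are in general only real FDCs: explicitly $P(\ell_1)=f^\sys_{\ell_1}$, as noted in the main text, and an analogous calculation identifies $P^{R_1}(\ell_2;\ell_1|r_1)$ with $f^\sys_{\ell_1}$ times the $\ell_2$-FDC of $\Evl_1(\ketbra{r_1}{r_1}\otimes\epsilonbm_{\ell_1})$; since $\epsilonbm_{\ell_1}$ is hermitian and trace-one but generally not positive, neither quantity need be non-negative. The main obstacle in this plan is the operator identity $\sum_{a,r}(M^{-1})_{r;a}\Pi_r\K_a=\mathrm{id}$, but it ultimately reduces to the defining property of the IC-POVM frame together with the invertibility of $M$ guaranteed by \cref{ass2}.
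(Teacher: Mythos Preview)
Your proposal is correct and follows essentially the same route as the paper's proof: both establish \cref{cond:reprep} via $\Pi_{a_1}\K_{a_1}=\K_{a_1}$, \cref{eq:c1} from the POVM completeness $\sum_{a_1}\E_{a_1}=\id$ (your phrasing here is actually more careful than the paper's shorthand ``$\sum_{a_1}\K_{a_1}=\id$''), \cref{eq:c2} via the frame-decomposition identity $\sum_{a,r}(M^{-1})_{r;a}\Pi_r\K_a=\mathrm{id}$, and the quasi-probability claim from $M^{-1}$ being quasi-stochastic. The one place you streamline is \cref{cond:causal}: the paper expands $\K_{a_2}$ in the frame and collapses the resulting $\sum_{a_2}(M^{-1})_{\ell_2;a_2}M_{a_2;\psi}=\delta_{\ell_2,\psi}$, whereas you go directly through $\sum_{\ell_2}(M^{-1})_{\ell_2;a_2}=1$ and trace preservation---a slightly cleaner path to the same endpoint. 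One small wording fix: it is $\Evl_1\Pi_{r_1}$ (together with $\sum_{a_2}\K_{a_2}$) that is trace preserving, not the composite $\Evl_1\Pi_{r_1}\K_{a_1}$, since $\K_{a_1}$ alone reduces the trace.
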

\begin{proof}
    To start, note that
    \begin{align}
        K_{a_1} \Evl_0(\Rho_0) &= \sum_{\psi, \epsilon} \left(V_0 \vec{f}_0\right)_{(\psi, \epsilon)} K_{a_1} \ketbra{\psi}{\psi} K_{a_1}^\dagger \otimes \ketbra{\epsilon}{\epsilon} \nonumber
        \\
        &= \sum_{\psi, \epsilon} M_{a_1;\psi} \left(V_0 \vec{f}_0\right)_{(\psi, \epsilon)} \ketbra{a_1}{a_1} \otimes \ketbra{\epsilon}{\epsilon}
    \end{align}
    and hence
    \begin{align}
        \Pi_{r_1}\K_{a_1}\Evl_0(\Rho_0) = \sum_{\psi, \epsilon} M_{a_1;\psi} \left(V_0 \vec{f}_0\right)_{(\psi, \epsilon)} \ketbra{r_1}{r_1} \otimes \ketbra{\epsilon}{\epsilon}.
        \label{eq:Pi K V(Rho)}
    \end{align}
    Furthermore, with the quantum mechanical model given in the main text, we get the following expressions for the probabilities of interest:
    \begin{align}
        &P(a_1)  =  \Tr{\K_{a_1} \Evl_0(\Rho_0)}
        \label{eq:P(a_1)}
        \\
        &P(a_2)  =  \Tr{\K_{a_2} \Evl_1 \Evl_0(\Rho_0)}
        \label{eq:P(a_2)}
        \\&
        P^{A_1}(a_2;a_1)=\Tr{\K_{a_2} \Evl_1 \K_{a_1}\Evl_0(\Rho_0)}\label{eq:P(a_2;a1)}
        \\
        &P^{R_1}(a_2| r_1)  
        =\Tr{\K_{a_2} \Evl_1 \Pi_{r_1} \Evl_0(\Rho_0)}
        \label{eq:P(a_2;r_1)}
        \\
        &P^{R_1, A_1}(a_2; a_1|r_1) 
        \label{eq:P(a_2;r_1a_1)}
        =\Tr{\K_{a_2} \Evl_1 \Pi_{r_1} \K_{a_1}\Evl_0(\Rho_0)}.
    \end{align}
    For the lemma we have to proof that for a quantum process using an IC-POVM as quantum measurement the equations
    \begin{align}
        &P^{A_1}(a_2;a_1)
        =P^{R_1,A_1}(a_2;a_1|r_1=a_1)\label{eq:le1}\\
        &\sum_{\ell_2} P^{R_1}(\ell_2;\ell_1|r_1)=P(\ell_1) \; \forall r_1.\label{eq:le2}
        \\&
        P^{R_1}(a_2| r_1)= \sum_{a_1}P^{R_1,A_1}(a_2;a_1|r_1), \label{eq:le3}
        \\&
        P(a_2) =   \sum_{a_1,r_1}(M^{-1})_{r_1;a_1}P^{R_1,A_1}(a_2; a_1|r_1).\label{eq:le4}
    \end{align}
    hold. 

    For \cref{eq:le1}, we have that
    \begin{align*}
	   &P^{R_1,A_1}(a_2;a_1|r_1=a_1)=\Tr{\K_{a_2} \Evl_1 \Pi_{a_1} \K_{a_1}\Evl_0(\Rho_0)}
	   =\Tr{\K_{a_2} \Evl_1 \K_{a_1}\Evl_0(\Rho_0)} = P^{A_1}(a_2;a_1),
    \end{align*}
    since $\Pi_{a_1}\K_{a_1}= \K_{a_1}$, as the preparation simply discards any former state on the system and prepares the new one, but here both are identical. 

    For \cref{eq:le2}, we have that
    \begin{align*}
        &\sum_{\ell_2}P^{R_1}(\ell_2;\ell_1|r_1)
        =\sum_{\ell_2,a_1,a_2} (M^{-1})_{\ell_2;a_2}  (M^{-1})_{\ell_1;a_1} P^{R_1,A_1}(a_2;a_1|r_1)
        \\
        &=\sum_{\ell_2,a_1,a_2} (M^{-1})_{\ell_2;a_2}  (M^{-1})_{\ell_1;a_1} \Tr{\K_{a_2} \Evl_1 \Pi_{r_1} \K_{a_1} \Evl_0(\Rho_0)}
    \end{align*}
    To continue, we will introduce some notation to help the reader following our next steps. Let $\Evl_1 \Pi_{r_1} \K_{a_1} \Evl_0(\Rho_0) = \sum_{(\psi, \epsilon)} f'_{(\psi,\epsilon)} \ketbra{\psi}{\psi} \otimes \ketbra{\epsilon}{\epsilon}$. Accordingly we find
    \begin{align*}
        \K_{a_2}\Evl_1 \Pi_{r_1} \K_{a_1} \Evl_0(\Rho_0) = \sum_{(\psi,\epsilon)} M_{a_2;\psi} f'_{(\psi,\epsilon)} \ketbra{a_2}{a_2}\otimes\ketbra{\epsilon}{\epsilon}.
    \end{align*}
    Within the trace operation this gives
    \begin{align*}
        &\Tr{\K_{a_2}\Evl_1 \Pi_{r_1} \K_{a_1} \Evl_0(\Rho_0)} = \Tr{\sum_{(\psi,\epsilon)} M_{a_2;\psi} f'_{(\psi,\epsilon)} \ketbra{a_2}{a_2}\otimes\ketbra{\epsilon}{\epsilon}}
        \\
        &= \sum_{(\psi,\epsilon)} M_{a_2;\psi} f'_{(\psi,\epsilon)} \ubs{=1}{\Tr{\ketbra{a_2}{a_2}\otimes\ketbra{\epsilon}{\epsilon}}}
        = \sum_{(\psi,\epsilon)} M_{a_2;\psi} f'_{(\psi,\epsilon)} \ubs{=1}{\Tr{\ketbra{\psi}{\psi}\otimes\ketbra{\epsilon}{\epsilon}}}
        = \sum_{(\psi,\epsilon)} M_{a_2;\psi} \Tr{f'_{(\psi,\epsilon)} \ketbra{\psi}{\psi}\otimes\ketbra{\epsilon}{\epsilon}},
    \end{align*}
    where we have used that $\Tr{\ketbra{a_2}{a_2}\otimes\ketbra{\epsilon}{\epsilon}} = 1 = \Tr{\ketbra{\psi}{\psi}\otimes\ketbra{\epsilon}{\epsilon}}$ and hence we can exchange both expressions with each other. Thus, going back to the main calculation,
    \begin{align*}
        \sum_{\ell_2}P^{R_1}(\ell_2;\ell_1|r_1) = &\sum_{\ell_2,a_1,a_2, (\psi, \epsilon)} (M^{-1})_{\ell_2;a_2}  (M^{-1})_{\ell_1;a_1} M_{a_2;\psi}
        \Tr{f'_{(\psi,\epsilon)} \ketbra{\psi}{\psi}\otimes\ketbra{\epsilon}{\epsilon}}
        \\
        = &\sum_{\ell_2,a_1,(\psi, \epsilon)} \ubs{= \ (\id)_{\ell_2;\psi} \ = \ \delta_{\ell_2;\psi}}{\left(\sum_{a_2} (M^{-1})_{\ell_2;a_2} M_{a_2;\psi}\right)} (M^{-1})_{\ell_1;a_1}
        \Tr{f'_{(\psi,\epsilon)} \ketbra{\psi}{\psi}\otimes\ketbra{\epsilon}{\epsilon}}
        \\
        = &\sum_{a_1} (M^{-1})_{\ell_1;a_1} \Tr{\sum_{(\psi, \epsilon)} f'_{(\psi,\epsilon)} \ketbra{\psi}{\psi}\otimes\ketbra{\epsilon}{\epsilon}}
        \\
        = &\sum_{a_1} (M^{-1})_{\ell_1;a_1} \Tr{\Evl_1   \Pi_{r_1} \K_{a_1} \Evl_0(\Rho_0)}
        = \sum_{a_1} (M^{-1})_{\ell_1;a_1}\Tr{ \K_{a_1} \Evl_0(\Rho_0)}
        = \sum_{a_1} (M^{-1})_{\ell_1;a_1} P(a_1)
        = P(\ell_1)
    \end{align*}
    and to solve the sum over $a_1$ we have used the same procedure as for the sum over $a_2$ described above. 

    \cref{eq:le3} is straightforward:
    \begin{align*}
        &\sum_{a_1}P^{R_1,A_1}(a_2;a_1|r_1)
        =\sum_{a_1} \Tr{\K_{a_2} \Evl_1 \Pi_{r_1} \K_{a_1} \Evl_0(\Rho_0)}
        \\
        &= \Tr{\K_{a_2} \Evl_1  \Pi_{r_1} \sum_{a_1}\K_{a_1} \Evl_0(\Rho_0)}
        =\Tr{\K_{a_2} \Evl_1 \Pi_{r_1} \Evl_0(\Rho_0)} = P^{R_1}(a_2| r_1),
    \end{align*}
    where we have used $\sum_{a_1}\K_{a_1} = \id$. 

    Finally, to prove \cref{eq:le4}, note that
    \begin{align*}
        &\sum_{a_1,r_1}(M^{-1})_{r_1;a_1}P^{R_1,A_1}(a_2; a_1|r_1)=
        \sum_{a_1,r_1} (M^{-1})_{r_1;a_1}  \Tr{\K_{a_2} \Evl_1 \Pi_{r_1} \K_{a_1} \Evl_0(\Rho_0)}
        \\
        &=  \Tr{\K_{a_2} \Evl_1 \left(\sum_{a_1,r_1} (M^{-1})_{r_1;a_1}  \Pi_{r_1} \K_{a_1} \Evl_0(\Rho_0)\right)},
    \end{align*}
    which is equal to $P(a_2)=\Tr{\K_{a_2} \Evl_1 \Evl_0(\Rho_0)}$, if $\sum_{a_1,r_1} (M^{-1})_{r_1;a_1}  \Pi_{r_1} \K_{a_1} \Evl_0(\Rho_0)=\Evl_0(\Rho_0)$. This last equation can be seen by applying the frame decomposition:
    \begin{align*}
        &\sum_{r_1, a_1} (M^{-1})_{r_1;a_1} \Pi_{r_1} \K_{a_1} \Evl_0(\Rho_0)
        = \sum_{\psi, \epsilon} \sum_{r_1} \ubs{= \delta_{r_1,\psi}}{\left(\sum_{a_1} (M^{-1})_{r_1;a_1} M_{a_1;\psi}\right)} \left(V_0 \vec{f}_0\right)_{(\psi, \epsilon)} \ketbra{r_1}{r_1} \otimes \ketbra{\epsilon}{\epsilon}
        \\
        &= \sum_{\psi, \epsilon} \left(V_0 \vec{f}_0\right)_{(\psi, \epsilon)} \ketbra{\psi}{\psi} \otimes \ketbra{\epsilon}{\epsilon} = \Evl_0(\Rho_0).
    \end{align*}
    That $P^{R_1}(\ell_2;\ell_1|r_1):=\sum_{a_1,a_2} (M^{-1})_{\ell_2;a_2}  (M^{-1})_{\ell_1;a_1} P^{R_1,A_1}(a_2;a_1|r_1)$ and $P(\ell_1):=\sum_{a_1} (M^{-1})_{\ell_1;a_1} P(a_1)$ are quasi probability distributions, follows directly from the fact that $P(a_1)$ and $P^{R_1,A_1}(a_2;a_1|r_1)$ are probability distributions, while $M_{a_i;\ell_i}=\Tr{\E_{a_i} \ketbra{\ell_i}{\ell_i}} = \Tr{K_{a_i} \ketbra{\ell_i}{\ell_i} K_{a_i}^\dagger}$  are stochastic matrices due to the normalization condition $\sum_{a_i} \Tr{K_{a_i} \ketbra{\ell_i}{\ell_i} K_{a_i}^\dagger}=\Tr{\sum_{a_i} K_{a_i}^\dagger K_{a_i} \ketbra{\ell_i}{\ell_i} }=\Tr{\ketbra{\ell_i}{\ell_i}}=1$ (and hence their inverse are quasi-stochastic matrices).
\end{proof}

\section{Proof of Theorem \ref{th:th}}\label{app:th3}
\begin{thm*}[\ref{th:th}]
	A quantum process using an $\F_\sys$-based IC-POVM on $\Hils$ as measurement, a $\F_\sys$-separable initial state $\Rho_0$ and $\F_\sys$-separable unitaries $\Evl_{0}, \Evl_1 \in \mathcal{U}(\Hil)$ as initial and intermediate evolutions produce a proper stochastic probability distribution for all contexts.
\end{thm*}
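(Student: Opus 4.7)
The plan is to combine \cref{lem:consistency} (which already gives the consistency relations \cref{cond:reprep,cond:causal,eq:c1,eq:c2} and that $P(\ell_1)$, $P^{R_1}(\ell_2;\ell_1|r_1)$ sum to one) with \cref{pr:pr}. To apply the theorem and conclude that a contextual IIIC-model exists, it suffices to upgrade the two reconstructed quasi-distributions to honest probability distributions, i.e., to establish non-negativity. I will do so by reading each reconstructed quantity as the system-side frame-decomposition coefficient of a manifestly $\F_\sys$-separable operator, and invoking \cref{def:F-separability}.

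First I would treat the one-time object. From \cref{lem:consistency} one has $P(\ell_1)=f^\sys_{\ell_1}$, the $\ell_1$-th FDC of $\partTr{\env}{\Evl_0(\Rho_0)}$. Since $\Rho_0$ is $\F_\sys$-separable and $\Evl_0$ preserves $\F_\sys$-separability, $\Evl_0(\Rho_0)$ admits a decomposition $\sum_\psi f^\sys_\psi\,\ketbra{\psi}{\psi}\otimes\hat{\epsilon}_\psi$ with $f^\sys_\psi\geq 0$ and each $\hat{\epsilon}_\psi$ a proper environmental state. Hence $P(\ell_1)\geq 0$ for all $\ell_1$.

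For the two-time object I would aim to factorise
\begin{equation*}
P^{R_1}(\ell_2;\ell_1|r_1)=\sum_{a_1,a_2}(M^{-1})_{\ell_2;a_2}(M^{-1})_{\ell_1;a_1}\,\Tr{\K_{a_2}\Evl_1\Pi_{r_1}\K_{a_1}\Evl_0(\Rho_0)}.
\end{equation*}
The key computation is to isolate the operator $\hat{\sigma}_{\ell_1,r_1}:=\sum_{a_1}(M^{-1})_{\ell_1;a_1}\Pi_{r_1}\K_{a_1}\Evl_0(\Rho_0)$. Expanding $\K_{a_1}\Evl_0(\Rho_0)$ in the frame $\F_\sys\otimes\F_\env$, applying $\Pi_{r_1}$ to replace the system-side projector by $\ketbra{r_1}{r_1}$, and then using $\sum_{a_1}(M^{-1})_{\ell_1;a_1}M_{a_1;\psi}=\delta_{\ell_1,\psi}$, the system index is forced to $\psi=\ell_1$. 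Re-summing the surviving environmental frame coefficients against the $\F_\sys$-separable decomposition of $\Evl_0(\Rho_0)$, one gets $\hat{\sigma}_{\ell_1,r_1}=f^\sys_{\ell_1}\,\ketbra{r_1}{r_1}\otimes\hat{\epsilon}_{\ell_1}$, which is a non-negative multiple of a (trivially) $\F_\sys$-separable product state.

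Propagating through $\Evl_1$, $\F_\sys$-separability of the unitary makes $\Evl_1(\ketbra{r_1}{r_1}\otimes\hat{\epsilon}_{\ell_1})$ again $\F_\sys$-separable, so its reduced system FDCs $\tilde f^\sys_{\ell_2}$ are non-negative. Applying the general identity $\sum_{a}(M^{-1})_{\ell;a}\Tr{\K_a\hat{\rho}}=(\text{$\ell$-th FDC of }\partTr{\env}{\hat{\rho}})$ — the same identity used in the proof of \cref{lem:consistency} — yields $P^{R_1}(\ell_2;\ell_1|r_1)=f^\sys_{\ell_1}\,\tilde f^\sys_{\ell_2}\geq 0$. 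Combining with the sum-to-one property from \cref{lem:consistency}, both $P(\ell_1)$ and $P^{R_1}(\ell_2;\ell_1|r_1)$ are proper probability distributions, so \cref{pr:pr} gives the contextual IIIC-model that reproduces the full statistics, concluding the theorem.

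I expect the main obstacle to be the frame-side bookkeeping in the factorisation $\hat{\sigma}_{\ell_1,r_1}=f^\sys_{\ell_1}\,\ketbra{r_1}{r_1}\otimes\hat{\epsilon}_{\ell_1}$: one must carefully track how the (non-CPTP) measurement channel $\K_{a_1}$ and the re-preparation $\Pi_{r_1}$ interact with a quasi-probabilistic frame expansion, and check that the environmental operator produced after summing over $a_1$ is precisely the \emph{proper} state $\hat{\epsilon}_{\ell_1}$ granted by $\F_\sys$-separability of $\Evl_0(\Rho_0)$, rather than a generic Hermitian trace-one object for which no positivity would follow.
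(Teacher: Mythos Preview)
Your proposal is correct and follows essentially the same strategy as the paper: combine \cref{lem:consistency} with \cref{pr:pr} and establish the missing positivity by identifying each reconstructed quantity as a system-side FDC of an $\F_\sys$-separable state. The only difference is the order of the two inversions---you carry out the $a_1$-inversion first to obtain the explicit factorisation $\hat{\sigma}_{\ell_1,r_1}=f^\sys_{\ell_1}\,\ketbra{r_1}{r_1}\otimes\hat{\epsilon}_{\ell_1}$ and then apply $\Evl_1$, whereas the paper inverts $a_2$ first (extracting $f'^\sys_{\ell_2}(a_1,r_1)$ from the normalised state $\Rho'_{a_1,r_1}$) and leaves the $a_1$-sum $\sum_{a_1}f'^\sys_{\ell_2}(a_1,r_1)(M^{-1})_{\ell_1;a_1}P(a_1)$ for last; your ordering makes the final non-negativity step more transparent.
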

To prove the theorem, we need to show that $P^{R_1}(\ell_2;\ell_1|r_1):=\sum_{a_1,a_2} (M^{-1})_{\ell_2;a_2}  (M^{-1})_{\ell_1;a_1} P^{R_1,A_1}(a_2;a_1|r_1)$ and $P(\ell_1):=\sum_{a_1} (M^{-1})_{\ell_1;a_1} P(a_1)$ are proper probability distributions with only positive entries. The theorem then follows directly from the lemma and \cref{pr:pr}. 

As explained in the main text, we can decompose a generic probability distribution in its frame decomposition
\begin{align}
    \Rho = \sum_{(\psi, \epsilon)} f_{(\psi, \epsilon)} \ketbra{\psi}{\psi} \otimes \ketbra{\epsilon}{\epsilon} = \sum_\psi f_\psi^\sys \ketbra{\psi}{\psi} \otimes \epsilonbm_\psi.
\end{align}
If $\Rho$ is $\F_\sys$-separable $f_\psi^\sys \geq 0$ and $\epsilonbm_\psi$ is a proper quantum state (in general $\epsilonbm_\psi$ is trace-one and hermitian for a minimal frame $\F_\sys$, but might be a indefinite or negative operator). Let us now define, for simplicity,
\begin{align}
    &\Rho:=\sum_\psi f_\psi^\sys \ketbra{\psi}{\psi} \otimes \epsilonbm_\psi:=\Evl_0(\Rho_0) 
    \\
    &\Rho'_{a_1,r_1} :=  \sum_\psi f_\psi'^\sys(a_1,r_1) \ketbra{\psi}{\psi} \otimes \epsilonbm'_\psi(a_1):=\frac{\Evl_1 \Pi_{r_1} \K_{a_1} \Evl_{0} \rho_0}{\Tr{\K_{a_1} \Evl_{0} \rho_0}}.
\end{align}
If $\Rho_0$, $\Evl_0$ and $\Evl_1$ are $\F_\sys$-separable the states $\Rho$ and $\Rho'_{a_1,r_1}$ are as well and hence, $f_\psi^\sys$, $f_\psi'^\sys(a_1,r_1) \geq 0$ and $\epsilonbm_\psi$, $\epsilonbm'_\psi(a_1)$ are proper quantum states.

With $M_{a_i;\ell_i}=\Tr{\E_{a_i} \ketbra{\ell_i}{\ell_i}} = \Tr{K_{a_i} \ketbra{\ell_i}{\ell_i} K_{a_i}^\dagger}$, we get that 
\begin{align*}
	&P(\ell_1):=\sum_{a_1} (M^{-1})_{\ell_1;a_1}P(a_1)
	=\sum_{a_1} (M^{-1})_{\ell_1;a_1} \Tr{\K_{a_1} \Evl_{0} \rho_0}
	\\&
	=\sum_{a_1} (M^{-1})_{\ell_1;a_1} \Tr{\K_{a_1} \sum_\psi f_\psi^\sys \ketbra{\psi}{\psi} \otimes \epsilonbm_\psi}
	\\&
	=\sum_{a_1} (M^{-1})_{\ell_1;a_1} \sum_\psi M_{a_1;\psi} f_\psi^\sys 
	=\sum_{\psi} \delta_{\ell_1,\psi} f_\psi^\sys 
	=f_{\ell_1}^\sys
	\geq 0
\end{align*}
and therefore
\begin{align*}
	&P^{R_1}(\ell_2;\ell_1|r_1)
	:=\sum_{a_1,a_2} (M^{-1})_{\ell_2;a_2}  (M^{-1})_{\ell_1;a_1} P^{R_1,A_1}(a_2;a_1|r_1)
	\\&
	:=\sum_{a_1,a_2} (M^{-1})_{\ell_2;a_2}  (M^{-1})_{\ell_1;a_1} \Tr{\K_{a_2}\Evl_1 \Pi_{r_1} \K_{a_1} \Evl_{0} \rho_0}
	\\&
	=\sum_{a_1,a_2} (M^{-1})_{\ell_2;a_2}  (M^{-1})_{\ell_1;a_1} 
	\Tr{\K_{a_2}\sum_\psi f_\psi'^\sys(a_1,r_1) \ketbra{\psi}{\psi} \otimes \epsilonbm'_\psi(a_1) \Tr{\K_{a_1} \Evl_{0} \rho_0}}
	\\&
	=\sum_{a_1,a_2} (M^{-1})_{\ell_2;a_2}  (M^{-1})_{\ell_1;a_1} \sum_\psi  M_{a_2;\psi} f_\psi'^\sys(a_1,r_1)  P(a_1)
	=\sum_{a_1} f_{\ell_{2}}'^\sys(a_1,r_1)  (M^{-1})_{\ell_1;a_1}P(a_1)
\geq 0,
\end{align*}
which ends the proof.\\

\section{Classically non-simulable process}\label{app:cnsp}
We report here more details on the process that cannot be simulated via classical invasive measurements discussed in the main text.

Both the system and the environment are two-level systems, $\mathcal{H}_S=\mathcal{H}_E=\mathbbm{C}^2$, and the global evolution is fixed by the unitary
\begin{equation}
    \Evl=e^{-\frac{i}{2}\left(\sigma_x \otimes \sigma_x + \sigma_y \otimes \sigma_y +2 \sigma_z \otimes \sigma_z\right)},\label{eq:evl}
\end{equation}
while the initial environmental state is $\tau_0 = \mathbbm{1}/2$. The CPTP dynamical maps that fix the open-system evolution in the absence of any intervention are thus given by -- compare with \cref{eq:redmap} --
\begin{eqnarray}\label{eq:mapsup}
    &&\Lambda(\Rho)
    =\partTr{\env}{\Evl (\Rho \otimes  \mathbbm{1}/2) \Evl^\dagger}
    =\frac{1}{2}\big(\Tr{\Rho}\mathbbm{1}+\cos(1)\cos(2) (\sigma_x \Tr{\sigma_x \Rho}
    +\sigma_y \Tr{\sigma_y \Rho})
    +\cos(1)^2 \sigma_z \Tr{\sigma_z \Rho}\big),
\end{eqnarray}
which corresponds to a contraction of the Bloch ball, isotropic along the $x-y$ plan by an amount $\cos(1)\cos(2)$ and by an amount $\cos(1)^2$ along the $z$-axis; here $\sigma_j$, $j=x,y,z$, are indeed the Pauli matrices and $\mathbbm{1}$ is the identity on $\mathbbm{C}^2$.

Considering the IC-POVM fixed by the pure states $\left\{\psi\right\}=\left\{\ket{0}, \frac{1}{\sqrt{3}}\ket{0}+\sqrt{\frac{2}{3}}e^{i 2k \pi/3}\ket{1}\right\}_{k=1,2,3}$, the map in Eq.(\ref{eq:mapsup}) is $\F_\sys$-positive with respect to the corresponding frame, i.e., it maps the regular tetrahedron corresponding to the convex hull of $\left\{\ket{\psi}\bra{\psi}\right\}$ into itself.  This can be verified by evaluating the FDCs of each of the four states given by $\Lambda(\ket{\psi}\bra{\psi})$. Using frame theory, for any state $\Rho$ the corresponding FDCs can be evaluated via the relation
\begin{equation}
    f_\psi(\Rho) = \bra{\psi}\mathbbm{S}^{-1}[\Rho]\ket{\psi},
\end{equation}
where $\mathbbm{S}^{-1}$ is the inverse of the map
\begin{equation}
    \mathbbm{S}(\Rho) = \sum_{\psi}\Tr{\ket{\psi}\bra{\psi} \Rho}\ket{\psi}\bra{\psi}.
\end{equation}
Using the Pauli matrices $\{\mathbbm{1}, \sigma_x, \sigma_y, \sigma_z\}$ as orthonormal basis for the space of Hermitian $2\times2$ matrices we know $\Rho = \frac{1}{2} \left(\Tr{\Rho} + \Tr{\sigma_x \Rho} + \Tr{\sigma_y \Rho} + \Tr{\sigma_z \Rho} \right)$ and we ca represent $\Rho$ by a vector $\vec{\rho}_\sigma = \frac{1}{2}(1, \Tr{ \vec{\sigma} \Rho} )^T$. One can show that in this orthonormal basis the super operator $\mathbbm{S}$ takes the form
\begin{equation}
    \mathbbm{S} = \begin{pmatrix} 2 & 0 & 0 & \\ 0 & \frac{2}{3} & 0 & 0 \\ 0 & 0 & \frac{2}{3} & 0 \\ 0 & 0 & 0 & \frac{2}{3} \end{pmatrix} \quad \und \quad \mathbbm{S}^{-1} = \begin{pmatrix} \frac{1}{2} & 0 & 0 & \\ 0 & \frac{3}{2} & 0 & 0 \\ 0 & 0 & \frac{3}{2} & 0 \\ 0 & 0 & 0 & \frac{3}{2} \end{pmatrix}.
\end{equation}
Accordingly we find
\begin{equation}
    \mathbbm{S}(\Rho) = \Tr{\Rho} \mathbbm{1}+\frac{1}{3}(\sigma_x \Tr{\sigma_x \Rho}
    +\sigma_y \Tr{\sigma_y \Rho}+\sigma_z \Tr{\sigma_z \Rho}\sigma_z)
\end{equation}
so that
\begin{equation}
    \mathbbm{S}^{-1}(\Rho) = \frac{1}{4}\Tr{\Rho}\mathbbm{1}+\frac{3}{4}(\sigma_x \Tr{\sigma_x \Rho}
    +\sigma_y \Tr{\sigma_y \Rho}+\sigma_z \Tr{\sigma_z \Rho}\sigma_z) 
\end{equation}
and the FDCs coefficients of the four evolved states $\left\{\Lambda(\ket{\psi}\bra{\psi})\right\}$ are reported in \cref{tab:FDCs}, from which one can see their positivity.
\begin{table}
    \caption{For the IC-POVM fixed by the pure states $\left\{\ket{\psi}\right\}=\left\{\ket{0}, \frac{1}{\sqrt{3}}\ket{0}+\sqrt{\frac{2}{3}}e^{i 2k \pi/3}\ket{1}\right\}_{k=1,2,3}$ this table lists the FDCs -- according to $\F = \left\{ \ketbra{\psi}{\psi} \right\}$ -- of the frame elements evolved by the CPTP map $\Lambda$, i.e. $f_\psi\left[ \ \Lambda(\ketbra{\psi'}{\psi'}) \ \right]$. The rows are indexed by $\psi$ and the columns by $\psi'$ using the abbreviations $a=\cos(1)\cos(2)\approx-0.22$ and $b=\cos(1)^2\approx0.29$. }
    \begin{ruledtabular}
    \begin{tabular}{llll}
        $\frac{1}{4}(1+3b)$&$\frac{1}{4}(1-b)$ & $\frac{1}{4}(1-b)$ &$\frac{1}{4}(1-b)$ \\
        $\frac{1}{4}(1-b)$&$\frac{1}{12}(3+8a+b)$ &$\frac{1}{12}(3-4a+b)$ &$\frac{1}{12}(3-4a+b)$ \\
        $\frac{1}{4}(1-b)$&$\frac{1}{12}(3-4a+b)$  &$\frac{1}{12}(3+8a+b)$ &$\frac{1}{12}(3-4a+b)$ \\
        $\frac{1}{4}(1-b)$&$\frac{1}{12}(3-4a+b)$ &$\frac{1}{12}(3-4a+b)$ &$\frac{1}{12}(3+8a+b)$
    \end{tabular}
    \end{ruledtabular}
    \label{tab:FDCs}
\end{table}

From the global unitary evolution in \cref{eq:evl}, we can indeed also evaluate all the multi-time probabilities associated with possible measurements and re-preparations at intermediate times. In particular, from \cref{eq:pr1a1} we get $P^{R_1,A1}(a_2; a_1|r_1)$; moreover, the matrix $M$ with elements $M_{a;\ell} = \Tr{K_a \ketbra{\ell}{\ell} K_a^\dagger}$ for the chose IC-POVM reads
\begin{eqnarray}
    M & = &\frac{1}{6}
                  \begin{pmatrix}
                             3 & 1 & 1 & 1\\
                             1 & 3 & 1 & 1\\
                             1 & 1 & 3 & 1\\
                             1 & 1 & 1 & 3
                 \end{pmatrix},
        \label{eq:ematrix}
\end{eqnarray}
and then one gets the values of $\sum_{a_1}P^{R_1,A1}(\ell_2;a_1| r_1)=\sum_{a_2}(M^{-1})_{\ell_2; a_2}P^{R_1,A1}(a_2; a_1|r_1)$ reported in \cref{tab:prob} in the main text.

\end{document}